\documentclass[letterpaper]{article} 
\usepackage[preprint]{aaai2027}
\usepackage[hyphens]{url}  
\usepackage{graphicx} 
\usepackage{natbib}  
\usepackage{caption} 
\usepackage{algorithm}

\usepackage{newfloat}
\usepackage{listings}
\DeclareCaptionStyle{ruled}{labelfont=normalfont,labelsep=colon,strut=off} 
\floatstyle{ruled}
\newfloat{listing}{tb}{lst}{}
\floatname{listing}{Listing}

\usepackage{booktabs}

\usepackage{textcomp}

\usepackage{comment}
\usepackage{amssymb,amsfonts}

\usepackage{url}
\usepackage[T1]{fontenc}
\usepackage[utf8]{inputenc}
\usepackage{blindtext}

\usepackage{color}
\usepackage{here}
\usepackage{fancyhdr}
\usepackage[title,titletoc]{appendix}

\usepackage{nameref}
\usepackage{subcaption}
\usepackage{lineno}

\usepackage{bm,enumerate,amsmath,amssymb,amsthm}
\newtheoremstyle{assumption}
  {\topsep}
  {\topsep}
  {}
  {1em}
  {\itshape}
  {}
  {.5em}
  {\thmname{#1}\thmnumber{ #2}\thmnote{ (#3)}}
\theoremstyle{assumption}
\newtheorem{assumption}{Assumption}
\newtheorem{lemma}{Lemma}
\newtheorem{remark}{Remark}

\newtheorem{theorem}{Theorem}
\newtheorem{example}{Example}
\newtheorem{definition}{Definition}

\newtheorem{proposition}{Proposition}

\usepackage[noend]{algpseudocode} 
\algnewcommand\Input{\item[{\textbf{Input:}}]}
\algnewcommand\Output{\item[{\textbf{Output:}}]}
\algrenewcommand\algorithmicdo{}
\algnewcommand\algorithmicto{\textbf{to}}
\algnewcommand\algorithmicbreak{\textbf{break}}
\algdef{SE}[FOR]{ForTo}{EndFor}[2]{\algorithmicfor\ #1\ \algorithmicto\ #2\ \algorithmicdo}{\algorithmicend\ \algorithmicfor}%
\algtext*{EndFor}
\algdef{SE}[IF]{NoThenIf}{EndIf}[1]{\algorithmicif\ #1}{\algorithmicend\ \algorithmicif}%
\algtext*{EndIf}
\let\oldReturn\Return
\renewcommand{\Return}{\State\oldReturn}

\newcommand{\mc}{\mathcal}

\newcommand{\newcolor}{black}

\def\arxiv{0}

\if\arxiv0

\else

\fi

\title{Computing Stable Matchings under Complementarities and Preference Misalignment}

\author{
    Tatsuya Iwase\textsuperscript{\rm 1},
    Baharak Rastegari\textsuperscript{\rm 2},
    Sebastian Stein\textsuperscript{\rm 2},
    Enrico H. Gerding\textsuperscript{\rm 2}
}
\affiliations{
    \textsuperscript{\rm 1}Toyota CRDL Inc., Aichi, Japan\\
    tiwase@mosk.tytlabs.co.jp\\
    \textsuperscript{\rm 2}University of Southampton, Southampton, United Kingdom\\
    B.Rastegari@soton.ac.uk, \{ss2, eg\}@ecs.soton.ac.uk

}

\begin{document}

\maketitle


\begin{abstract}
We study many-to-one, two-sided stable matching problems in which preferences are complementary and firms and workers may rank the same allocation differently. Here a \emph{coalition} is a group of workers that can be jointly matched with a firm. With complementary preferences, a stable matching need not exist. We show that a stable matching exists for every instance when two conditions hold. The first requires each firm to have an \emph{anchor worker} who is included in every feasible coalition that can be matched with that firm. The second is Transitive Alignment, which means there must be an overall ranking across different coalitions that is consistent with the workers' preferences.
We propose CDAR (Combinatorial Deferred Acceptance with Reproposals),
a Deferred Acceptance-style algorithm that permits reproposals, and prove its
finite-time convergence and its output of a stable matching by constructing an
$N$-digit potential function.
Moreover, we show that, under strict preferences, the output of CDAR is Pareto
optimal among stable matchings.
The proposed framework naturally captures applications such as vehicle-route
assignment for traffic safety and the misaligned interests of labor and
management in wage negotiation.
\end{abstract}

\section{Introduction}
\label{sec:intro}

In many settings, such as multi-agent services, transportation, and supply chains, the problem arises of assigning a group of multiple heterogeneous agents to jointly accomplish a given task.
The generalization of such problems is the many-to-one two-sided matching
problem, and in this paper, following the convention of matching theory, we
call the two sides \emph{workers} and \emph{firms}, respectively.
In many-to-one matching theory, the vast majority of prior work has assumed \emph{substitutable} preferences. Substitutability is the property that workers can serve as substitutes for one another; under this assumption, the existence of stable matchings and the convergence of the Deferred Acceptance (DA) algorithm are known to be guaranteed \cite{gale1962college,kelso1982job,hatfield2005matching,ostrovsky2008stability,roth1990two}.

In contrast, many real-world assignment problems exhibit \emph{complementarity}, whereby workers need one another. A representative example is the \emph{peer effect}, in which a worker's utility depends on the composition of the other workers employed by the same firm. The same structure appears in several settings. In team formation, the value of a team improves only once a particular member joins. In transportation, which motivates this work, a trip is realized only if every segment of the route is secured. In service chaining for distributed AI services, the service is realized only if all processing steps are executed. Under such complementary preferences, the classical theory built on substitutability is generally not applicable, and,
as exemplified by resident matching with couples \cite{roth1984evolution,biro2013matching},
even the existence of a stable matching is not guaranteed. Many recent approaches to complementarity consider settings in which workers
rank only firms (excluding peer effects)
\citep{che2019stable,nguyen2018near,nguyen2021stability,rostek2020matching,huang2023stable},
and do not guarantee exact stability in finite markets under peer effects.

Among these lines of work, the one that squarely addresses peer effects is \citet{pycia2012stability}. That paper proposed a framework in which both firms and workers hold general preferences over entire employment coalitions.
The central result of \citet{pycia2012stability} is that a stable matching
exists whenever the preferences induced by a \emph{division rule} (e.g., Nash
bargaining) that allocates the value of each coalition satisfy \emph{pairwise
alignment} (PA)---the property that any two agents who belong to two common
coalitions always rank those coalitions in the same order.
However, such a framework, which aligns the rankings of the two sides through
the division of value, cannot handle situations in which the two sides'
evaluations are given independently.
For instance, in wage negotiation, the workers desire high wages while the firm
wishes to keep its payments low, so their evaluations of the same coalition are
reversed and pairwise alignment fails. As a
consequence, applications in which the two sides' evaluations are not aligned
have remained outside the scope of existing theory.

The framework of \citet{pycia2012stability} has a further limitation. While it guarantees \emph{that a stable matching exists}, it provides no \emph{procedure for actually constructing} such a matching, and it is not clear how an algorithm for finding a stable matching under pairwise-aligned preferences should be designed. This gap between existence proofs and constructive algorithms is a major
obstacle to implementation in real problems.
By contrast, for one-sided hedonic games, constructive procedures are available.
Here the core---a partition admitting no blocking coalition---is the counterpart
of stability, and greedy methods for computing the core \citep{caskurlu2024hedonic}
and the convergence of deviation dynamics \citep{brandt2023reaching} have been
studied. However, these methods do not extend to two-sided settings. Moreover,
the greedy method presumes a global ranking shared by all agents together with
explicit handling of coalitions, which is infeasible when the number of
coalitions is exponential. In the two-sided setting,
\citet{echenique2007solution} enumerates the strong core \citep{roth1990two} but
does not guarantee its existence.

In this paper, to resolve these issues, we generalize the framework of
\citet{pycia2012stability} to the firm-worker misalignment setting and provide
a proposal-based algorithm that always finds a stable matching even when the
strong core is empty.
To this end, we introduce a mild assumption that any instance can be made to
satisfy: for each firm there exists a worker---the \emph{anchor worker}---
contained in every feasible coalition of that firm. This assumption serves as one of the constraints that align firms and workers whose preference orders need not coincide. Since the anchor condition alone does not function as a sufficient condition for stability, we propose a complementary condition on worker preferences called \emph{Transitive Alignment} (TA).
TA requires the existence of a metaranking, common to the workers, over the
desirability of coalitions.
While TA implies the PA of \citet{pycia2012stability} restricted to comparisons
among workers, it imposes no restriction on firm utilities and is therefore a
weaker condition than PA. That is, the framework of this paper covers the class
targeted by \citet{pycia2012stability}.
Combining the anchor condition with TA yields a sufficient condition guaranteeing the existence of a stable matching even in the firm-worker misalignment setting. In addition, using an $N$-digit potential function, we prove that a Deferred Acceptance-style algorithm permitting reproposals converges in finite time and outputs a stable matching. This work is a first step toward providing a means of constructing solutions for stable matching problems under complementary preferences.

\paragraph{Contributions.} The contributions of this paper are as follows.

\begin{itemize}
    \item We generalize the framework of \citet{pycia2012stability} to the firm-worker misalignment setting, in which the preference orders of firms and workers need not coincide.
    \item We provide sufficient conditions---the anchor condition and
    Transitive Alignment (TA)---for the existence of a stable matching under
    firm-worker misalignment.
    \item We propose CDAR, an algorithm that always outputs a stable matching
    under the anchor condition and TA, using only local operations---without
    enumerating coalitions explicitly and without constructing the metaranking.
    \item By constructing an $N$-digit potential function, we prove that CDAR converges in finite time and outputs a stable matching.
    \item We show, by a reduction, that CDAR outputs a pairwise stable matching also in the setting of \citet{pycia2012stability}.
    \item Under strict preferences, we extend the known result that any stable matching is Pareto optimal within the set of stable matchings \citep{roth1990two} to the setting that permits complementarity.
    \item To demonstrate the applicability of the proposed method, we confirm
    that the anchor condition and TA hold by construction in the traffic
    application and that CDAR outputs stable matchings.
\end{itemize}

\section{Model}
\label{sec:model}

In this section, we formalize a model that generalizes the framework of \citet{pycia2012stability} to the firm-worker misalignment setting. To convey the intuitive correspondence, we introduce each concept below through the concrete example of ``a firm hiring a team of workers.''

\subsection{Agents, Coalitions, and Utilities}
\label{sec:notation}

Let $\mc{F}$ denote the set of firms ($N=|\mc{F}|$), $\mc{W}$ the set of workers,
and $\mc{I}=\mc{F}\cup\mc{W}$ the set of all agents. A subset of workers $C\subseteq\mc{W}$ is called a \emph{coalition} and represents a team hired by a firm. Each firm $f\in\mc{F}$ has a finite set of feasible coalitions $\mc{C}_f\subseteq 2^{\mc{W}}$, and the situation in which firm $f$ employs a coalition $C\in\mc{C}_f$ is called an \emph{allocation} $(f,C)$.

A firm's preferences are represented by a utility function over entire coalitions
\[
U_f:\mc{C}_f\cup\{\emptyset\}\to\mathbb{R}_{\ge0},
\]
with the utility of remaining without employees defined as $U_f(\emptyset):=0$. On the other hand, the preferences of a worker $w$ are represented by a utility function over the allocations to which the worker belongs,
\[
V_w:
\{(f,C)\mid f\in\mc{F},\,C\in\mc{C}_f,\,w\in C\}
\cup\{\emptyset\}
\to\mathbb{R}_{\ge0},
\]
with the utility of unemployment defined as $V_w(\emptyset):=0$. That is, a worker's utility varies depending on who the members (peers) of the coalition to which the worker belongs are, thereby expressing the peer effect. The essential difference from \citet{pycia2012stability} is that the firm utility $U_f$ and the worker utility $V_w$ are given independently. For instance, the case of multi-LLM agents---where each agent $w$ consumes
$V_w(f,C)$ tokens and the firm $f$ pays a budget $B(C):=\sum_{w\in C}V_w(f,C)$---is
a zero-sum-like situation with $U_f(C)\approx\text{const}-B(C)$.

Both utilities are assumed to satisfy individual rationality in the following sense:
\[
U_f(C)>U_f(\emptyset)=0,
\qquad
\forall f\in\mc{F},\,C\in\mc{C}_f,
\]
\[
V_w(f,C)>V_w(\emptyset)=0,
\qquad
\forall w\in C.
\]


For each firm $f\in\mc{F}$, the set of feasible coalitions $\mc{C}_f$ is finite,
and all utility values $U_f(C)$, $V_w(f,C)$ are assumed to be finite real numbers.

\subsection{Matching}
\label{sec:matching}

\begin{definition}[Matching]
\label{def:matching}
A matching is a mapping
\[
\mu:\mc{I}\to 2^{\mc{I}}
\]
satisfying the following:
\begin{itemize}
    \item $\mu(w)=\{f\}$ if worker $w$ is employed by firm $f$, and $\mu(w)=\emptyset$ if $w$ is unemployed,
    \item $\mu(f)\in\mc{C}_f\cup\{\emptyset\}$, $\forall f\in\mc{F}$,
    \item $\mu(w)=\{f\}\iff w\in\mu(f)$, $\forall w\in\mc{W},\,f\in\mc{F}$.
\end{itemize}
\end{definition}

That is, a matching $\mu$ assigns to each firm at most one coalition, and specifies each worker's employer consistently with the firm employing the coalition to which the worker belongs.


\subsection{Stability and Blocking Pairs}

Given a matching $\mu$, if a firm $f$ and a set of workers can jointly deviate from their current allocations and improve their payoffs, the matching is said to be unstable.

\begin{definition}[Blocking pair]
\label{def:blocking}
For $f\in\mc{F}$ and $C\in\mc{C}_f$, the allocation $(f,C)$ is a \emph{blocking pair} for a matching $\mu$ iff
\begin{align}
U_f(C) &> U_f(\mu(f)), \label{eq:block_firm}\\
V_w(f,C) &> V_w(\mu(w),\mu(\mu(w))),
\quad \forall w\in C \label{eq:block_worker}
\end{align}
hold. Here, when $w$ is unemployed ($\mu(w)=\emptyset$), the composition $\mu(\mu(w))=\mu(\emptyset)$ is not formally defined since the domain of $\mu$ is $\mc{I}$. In this paper we adopt the convention
\[
V_w(\mu(w),\mu(\mu(w))) := V_w(\emptyset)
\]
in that case. 
\end{definition}

Intuitively, when firm $f$ prefers $C$ to its current coalition $\mu(f)$ (Eq.~\eqref{eq:block_firm}), and every worker in $C$ prefers the allocation $(f,C)$ under firm $f$ to their current employment (Eq.~\eqref{eq:block_worker}), the pair $(f,C)$ constitutes a mutually desirable deviation.

\begin{definition}[Stability]
\label{def:stable}
A matching $\mu$ is \emph{stable} iff it admits no blocking pair. 
\end{definition}

We denote by $\mc{M}$ the set of all stable matchings. The problem of finding a
matching $\mu\in\mc{M}$ is called \emph{Misaligned Coalitional Matching}
(\textbf{MCM}).

\subsection{Anchor Assumption}
\label{sec:anchor}

Under complementary preferences, stable matchings may not exist in general \cite{roth1985college}. In this paper, we introduce the following structural assumption for each firm.

\begin{assumption}[Anchor worker]
\label{asmp:anchor}
For each firm $f\in\mc{F}$, there exists a worker $a_f\in\mc{W}$ such that
\[
a_f\in C,
\qquad
\forall C\in\mc{C}_f.
\]
That is, $a_f$ is contained in every feasible coalition of firm $f$. We call $a_f$ the \emph{anchor worker} of firm $f$.
\end{assumption}

The anchor worker makes any two feasible coalitions of firm $f$ comparable under $\succeq_\chi$, and plays a central role in the convergence proof given later (the monotonicity of the potential function).

To preserve generality, we do not exclude the case in which different firms
share the same anchor. In that case some firms may fail to obtain any coalition
($\mu(f)=\emptyset$), but the stability and convergence results below still hold.

\subsection{Transitive Alignment and Stability Conditions}
\label{sec:alignment}

The anchor condition on firms is not sufficient by itself to guarantee the existence of a stable matching. An additional consistency condition on workers’ preferences is also needed.

\citet{pycia2012stability} showed that for stable matchings to exist broadly
under complementary preferences, a common ranking that allows cross-agent
comparison of preferences (a \emph{metaranking}) is necessary.
Intuitively, under complementarity, the formation of one coalition can
determine the success or failure of another, so reaching coalition formation
requires a common yardstick that prioritizes among coalitions.
In the firm-worker misalignment setting of this paper, the preference orders
of firms and workers need not coincide, so a ranking common to all agents does
not hold in general.
We therefore introduce Transitive Alignment as a condition that requires the
existence of a metaranking \emph{restricted to the worker side}.
Below we write $\mc{A}:=\{(f,C)\mid f\in\mc{F},\ C\in\mc{C}_f\}$ for the set of
all allocations, and for an allocation $x=(f,C)$ we abbreviate $w\in C$ as
$w\in x$.

\begin{definition}[Transitive Alignment (TA)]
\label{def:ta}
The workers' utilities are \emph{transitively aligned} if there exists a weak
order (\emph{metaranking}) $\succeq_\chi$ on $\mc{A}$ such that for every
worker $w$ and all allocations $x,y\in\mc{A}$ with $w\in x$ and $w\in y$,
\[
V_w(x)\ \ge\ V_w(y)
\iff
x \succeq_\chi y
\]
holds.
\end{definition}

That is, TA requires the existence of a common cross-coalition ranking with
which each worker's preferences agree---over the allocations to which the
worker belongs.
In particular, on any two allocations sharing common workers, the orders of
all shared workers coincide via $\succeq_\chi$, so TA implies the condition
obtained by restricting the pairwise alignment (PA) of
\citet{pycia2012stability} to comparisons among workers
(Example~\ref{ex:no_ta}).
On the other hand, the order of any pair of allocations with no common worker
is unconstrained.
Moreover, since the firm utilities $U_f$ are given independently of
$\succeq_\chi$, TA is weaker than PA in that it does not require the
firm-worker agreement of preferences.
The existence of a metaranking is equivalent to the absence of strictly
improving cycles when the workers' preference comparisons are chained
\citep{suzumura1976remarks}, and for finite instances it can be decided and
constructed in polynomial time.
Furthermore, when the utilities of all workers can be written as $V_w(x)=g(x)$
for some function $g:\mc{A}\to\mathbb{R}$, $g$ itself is a metaranking and TA
holds trivially.
TA, however, requires only the existence of a metaranking; it does not require
that $g$ or $\succeq_\chi$ be given explicitly.

The following two examples show that neither the anchor condition nor TA alone can guaranttee the existence of a stable matching. The second example also shows that worker-side PA cannot subsittute for TA.

\begin{example}[TA holds, no anchor]
\label{ex:no_anchor}
Let the set of firms be $\mc{F}=\{A,B\}$ and the set of workers $\mc{W}=\{a,b,c\}$.
The feasible coalitions of firm $A$ are $\mc{C}_A=\{\{a,c\}\}$,
and those of firm $B$ are $\mc{C}_B=\{\{a\},\{b,c\}\}$.
The firm utilities are $U_A(\{a,c\})=1$, $U_B(\{a\})=2$, $U_B(\{b,c\})=1$.
The worker utilities are such that:
\begin{align*}
&V_a(A,\{a,c\}) > V_a(B,\{a\}) > V_a(\emptyset),\\
&V_b(B,\{b,c\}) > V_b(\emptyset),\\
&V_c(B,\{b,c\}) > V_c(A,\{a,c\}) > V_c(\emptyset).
\end{align*}
Since no worker is contained in all coalitions of firm $B$, the anchor condition fails.
Note that, if firm $B$ employs $\{a\}$, then $(A,\{a,c\})$ is a blocking pair;
if firm $A$ employs $\{a,c\}$, then $(B,\{b,c\})$ is a blocking pair;
and if firm $B$ employs $\{b,c\}$, then $(B,\{a\})$ is a blocking pair.
Since this cycle contains all feasible coalitions, no stable matching exists
(Fig.~\ref{fig:TAnoAnchor}).
TA, on the other hand, holds:
$(B,\{b,c\})\ \succ_\chi\ (A,\{a,c\})\ \succ_\chi\ (B,\{a\})$ gives a metaranking.
\end{example}


\begin{figure}[t]
\centering
\begin{subfigure}[t]{.54\columnwidth}
  \centering
  \includegraphics[width=.8\linewidth]{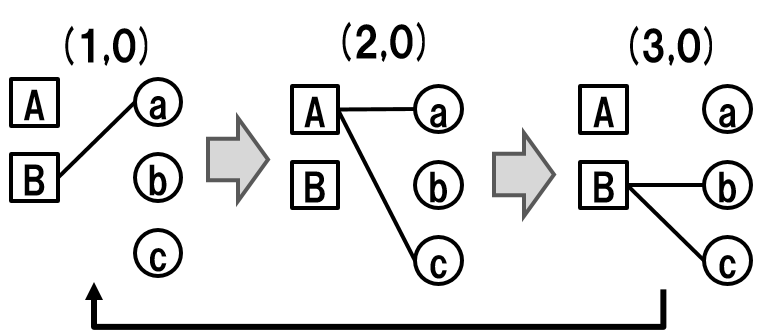}
  \subcaption{TA holds, no anchor (Example~\ref{ex:no_anchor}).
  The value at each state is the potential $\phi$ defined later.}
  \label{fig:TAnoAnchor}
\end{subfigure}\hfill
\begin{subfigure}[t]{.42\columnwidth}
  \centering
  \includegraphics[width=1.1\linewidth]{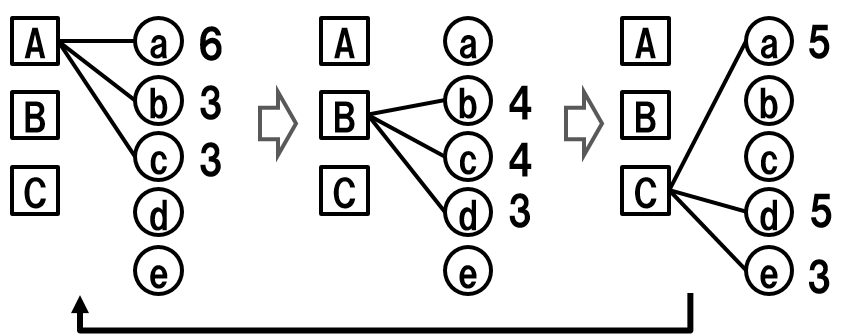}
  \subcaption{Anchor holds, PA holds, TA fails (Example~\ref{ex:no_ta})}
  \label{fig:PAnotTA}
\end{subfigure}
\caption{Two counterexamples in which no stable matching exists.
In (a), TA holds, so a metaranking and $\phi$ are well-defined, and $\phi$
decreases at the step where the anchor condition fails.
In (b), the workers' preferences are cyclic, so no metaranking exists at all.}
\label{fig:counterexamples}
\end{figure}

\begin{example}[Anchor holds, worker-side PA holds, TA fails]
\label{ex:no_ta}
Let the set of firms be $\mc{F}=\{A,B,C\}$ and the set of workers $\mc{W}=\{a,b,c,d,e\}$.
The feasible coalitions are $C_1=\{a,b,c\}$, $C_2=\{b,c,d\}$, $C_3=\{a,d,e\}$, with
$\mc{C}_A=\{C_1\}$, $\mc{C}_B=\{C_2\}$, $\mc{C}_C=\{C_3\}$.
All firm utilities are $U_A(C_1)=U_B(C_2)=U_C(C_3)=1$.
The anchor workers of the firms are $a_A=a$, $a_B=b$, $a_C=e$, respectively, so the anchor condition holds.
The worker utilities are as follows:
\begin{align*}
&V_a(A,C_1)=6,\quad V_a(C,C_3)=5,\\
&V_b(B,C_2)=4,\quad V_b(A,C_1)=3,\\
&V_c(B,C_2)=4,\quad V_c(A,C_1)=3,\\
&V_d(C,C_3)=5,\quad V_d(B,C_2)=3,\\
&V_e(C,C_3)=3.
\end{align*}
For $b,c \in C_1\cap C_2$ we have $V_b(B,C_2)>V_b(A,C_1)$ and $V_c(B,C_2)>V_c(A,C_1)$,
so PA among workers holds.
However, any metaranking $\succeq_\chi$ would have to satisfy simultaneously
$(B,C_2)\succ_\chi(A,C_1)$ by the preferences of $b$,
$(A,C_1)\succ_\chi(C,C_3)$ by the preferences of $a$, and
$(C,C_3)\succ_\chi(B,C_2)$ by the preferences of $d$,
which is cyclic and hence contradictory.
Therefore no metaranking satisfying TA exists.
Since each firm's set of feasible coalitions consists of only one coalition,
$C_1$, $C_2$, or $C_3$ respectively, the cycle $C_1\to C_2\to C_3\to C_1$ contains all feasible coalitions.
Therefore, whichever matching is chosen, the next coalition in the cycle forms a blocking pair,
and no stable matching exists (Fig.~\ref{fig:PAnotTA}).
\end{example}


\begin{remark}
\label{rem:wlog}
The anchor condition (Assumption~\ref{asmp:anchor}) can always be satisfied for
any instance by adding to each coalition in $\mc{C}_f$ a dummy worker
$d_f\notin\mc{W}$ dedicated to firm $f$. Moreover, if $d_f$'s preferences are
assumed to agree with the metaranking $\succeq_\chi$, then TA is preserved and
no stable matching of the original instance is lost.
\end{remark}

Imposing both conditions simultaneously guarantees the existence of a stable matching.

\begin{theorem}
\label{thm:existence}
Under the anchor condition (Assumption~\ref{asmp:anchor}) and TA
(Definition~\ref{def:ta}), every instance of MCM admits
at least one stable matching.
\end{theorem}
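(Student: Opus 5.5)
The plan is a direct constructive proof by a \emph{greedy serial-dictatorship argument along the metaranking}. Fix a metaranking $\succeq_\chi$ on $\mc{A}$ given by TA (Definition~\ref{def:ta}), and break any ties arbitrarily to obtain a linear extension. Construct $\mu$ iteratively. Maintain the sets $\mc{F}'$ of still-unmatched firms and $\mc{W}'$ of still-unmatched workers, initially $\mc{F}$ and $\mc{W}$. Call an allocation $(f,C)$ \emph{available} if $f\in\mc{F}'$ and $C\subseteq\mc{W}'$. At each step, choose a $\succeq_\chi$-maximal available allocation $x=(f,C)$, set $\mu(f)=C$ and $\mu(w)=\{f\}$ for all $w\in C$, and remove $f$ and $C$ from $\mc{F}'$ and $\mc{W}'$. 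Stop when no allocation is available, and leave all remaining agents unmatched. Since $\mc{A}$ is finite and each step removes a firm, this terminates. The result is a matching in the sense of Definition~\ref{def:matching}, because the chosen coalitions are pairwise disjoint and each firm receives at most one.

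To show stability, suppose $(f,C)$ is a blocking pair. First, $(f,C)$ cannot be available when the procedure stops, since the procedure only stops when nothing is available. Hence there is a first step $t$ at which $(f,C)$ ceases to be available. Let $x_t=(g,C_t)$ be the allocation chosen at step $t$. Then $(f,C)$ was available at the start of step $t$, so maximality gives $x_t\succeq_\chi (f,C)$. It becomes unavailable either because $C\cap C_t\neq\emptyset$ or because $g=f$.

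The key step is that the anchor condition (Assumption~\ref{asmp:anchor}) collapses the second case into the first. If $g=f$, then both $C$ and $C_t$ lie in $\mc{C}_f$, so both contain $a_f$, and therefore $C\cap C_t\ni a_f$. In either case some worker $w\in C\cap C_t$ exists. This worker is matched in $\mu$ to $x_t$, and by TA, $x_t\succeq_\chi(f,C)$ yields $V_w(x_t)\ge V_w(f,C)$. This contradicts the strict inequality \eqref{eq:block_worker} for $w$. Hence no blocking pair exists and $\mu$ is stable.

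I expect the only delicate point to be this case distinction: making sure that losing availability because the firm is taken is correctly reduced, via the anchor, to a worker-overlap conflict. This is exactly where Example~\ref{ex:no_anchor} fails. The argument never uses the firm condition \eqref{eq:block_firm} or the firm utilities $U_f$, which is consistent with the misalignment setting. Ties in $\succeq_\chi$ cause no trouble, because blocking requires strict improvement for every worker.
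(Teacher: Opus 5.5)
Your proof is correct, and it takes a genuinely different route from the paper. The paper never builds a matching directly from $\succeq_\chi$. It gets existence as a corollary of the correctness of CDAR, a firm-proposing deferred-acceptance procedure with reproposals. Termination comes from the $N$-digit potential $\phi$ (Lemma~\ref{lem:potential}): the anchor forces $P_f$ to rise strictly whenever $f$'s own proposal is accepted, and a firm that loses a worker is outranked in $\chi$ by the firm that took it. Stability comes from the monotone shrinking of the rejection sets $S_f^C$. Your greedy along the metaranking is essentially the common-ranking (serial-dictatorship) argument from hedonic games, with the anchor acting as a proxy for the firm. The anchor turns ``firm already used'' into ``worker already used,'' and TA then does the rest. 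Your route buys brevity: at most $N$ steps, no potential function, no reproposal bookkeeping. It also gives a stronger conclusion. Since you never use \eqref{eq:block_firm}, your matching admits no allocation $(f,C)$ in which every worker of $C$ strictly improves, for arbitrary $U_f$. This makes plain that firm utilities play no role in existence under these assumptions, and that the greedy core constructions the paper sets aside as not extending to two-sided settings do carry over, for existence, once an anchor is available. The paper's route buys a procedure that uses only local information. Firms propose by their own $U_f$ and workers accept or reject by their own $V_w$, so $\succeq_\chi$ appears only in the analysis and never has to be computed. CDAR also never maximizes over all of $\mc{A}$, which may be exponentially large, whereas proposals can be generated implicitly, e.g.\ via Yen's algorithm. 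Finally, CDAR's output reflects firm preferences, whereas your greedy output is indifferent to $U_f$.
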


The proof of Theorem~\ref{thm:existence} is given by showing that the algorithm
CDAR proposed in the next section converges in finite time and outputs a stable
matching (Theorem~\ref{thm:converge}, Theorem~\ref{thm:stable}).
Note that the anchor condition and TA are sufficient conditions
for the existence of a stable matching but not necessary, as the following shows.

\begin{example}[No TA, no anchor, yet stable]
\label{ex:sufficient_only}
To Example~\ref{ex:no_ta}, add a coalition $\{d\}$ only to firm $A$
($\mc{C}_A=\{C_1,\{d\}\}$, with $U_A(\{d\})=2>1=U_A(C_1)$ and
$V_d(A,\{d\})=6>V_d(C,C_3)$; everything else as in Example~\ref{ex:no_ta}).
Since $C_1\cap\{d\}=\emptyset$, firm $A$ has no anchor worker, and the preference
cycle of Example~\ref{ex:no_ta} remains, so TA also fails.
Nevertheless, $\mu(A)=\{d\}$ (all other agents unassigned) is a stable matching.
\end{example}

TA holds naturally when workers are shared resources evaluated on a common
scale such as congestion, delay, or resource consumption.
As one such example, consider vehicle-route assignment for traffic safety.
Represent the road network as an undirected graph $\langle \mc{V}, \mc{E} \rangle$,
and define slots $(v,t) \in \mc{V}\times\mc{T}$ representing arrival at
intersection $v$ at time $t$.
The components of the framework then correspond as follows.

\begin{center}
\small
\begin{tabular}{lp{0.52\columnwidth}}
\hline
This framework & Traffic application \\
\hline
Firm $f \in \mc{F}$ & Vehicle \\
Worker $w \in \mc{W}$ & Spatio-temporal slot $(v,t)$ \\
Coalition $C \in \mc{C}_f$ & Feasible route (sequence of slots) \\
Anchor worker $a_f$ & Departure slot $(O_f, T^0_f)$ \\
Firm utility $U_f(C)$ & $-T(C)$ (negative arrival time) \\
Worker utility $V_w(f,C)$ & $-H(C)$ (negative total pedestrian count along route $C$) \\
\hline
\end{tabular}
\end{center}

Two vehicles \emph{collide} if their routes occupy the same 
slot; a feasible route must be collision-free with the routes of the
other matched vehicles.
The vehicle utility $-T(C)$ and the slot utility $-H(C)$ are independent and may
involve a trade-off, in which sense this constitutes an instance of the
firm-worker misalignment.
Since $T(C)$ and $H(C)$ are bounded, a constant shift
$U_f(C):=\text{const}-T(C)$ meets the normalization $U_f,V_w\ge0$ of
Section \ref{sec:notation} without changing preference orders among coalitions or
the truth value of TA, so we retain the notation $-T(C)$ etc. below.
Since the departure slot $(O_f, T^0_f)$ of each vehicle is necessarily contained
in every feasible route, the anchor condition holds automatically by construction.
Here $V_w(f,C)=-H(C)$ is a function common to all slots---the total number of pedestrians encountered over the entire
route $C$---so $-H$ itself is a metaranking and
TA holds automatically by construction (see Section \ref{sec:alignment}).

The same correspondence also holds for service chaining in distributed AI
services, which exhibits the complementarity that the service is realized only
if all processing functions are secured (see Appendix \ref{sec:service-chaining}).
Other settings in which firms' and workers' evaluations of the same allocation
conflict, such as labor-management conflict in wage negotiation, are also widely
covered by this framework.

\section{Algorithm}
\label{sec:algorithm}

In this section, under the anchor condition (Assumption~\ref{asmp:anchor}) and TA (Definition~\ref{def:ta}), we propose Combinatorial Deferred Acceptance with Reproposals (CDAR), an algorithm that terminates and returns a stable matching.

\subsection{Algorithm}

CDAR is based on the firm-proposing DA algorithm; the essential difference from
standard DA is that it permits the reproposals that arise under complementary
preferences.
As shown in Example~\ref{ex:no_anchor}, under complementary preferences a
coalition rejected at one point may later become acceptable.

Let the proposal sequence $\Phi:\mc{F}\to\mc{F}$ be an infinite sequence in which
every firm appears infinitely often.
After initializing all agents as unassigned, while there exists a firm with room
for improvement, the following is repeated: the firm $f$ indicated by $\Phi$
proposes its best not-yet-proposed coalition $C$, and records as the rejection
set $S_f^C$ the workers who do not prefer $(f,C)$ to their current allocation
(line~\ref{alg:reject}).
If no rejection occurs ($S_f^C=\emptyset$), then after releasing the workers of
$f$'s old coalition not contained in $C$, and the allocations of rival firms
that employed workers in $C$ (line~\ref{alg:unmatch}), the allocation $(f,C)$
is newly registered.

The key difference from standard DA lies in the reproposal rule
(line~\ref{alg:repropose-begin}).
A coalition $C'\in R_{f'}$ whose proposal was rejected is removed from the
proposed list $R_{f'}$---thereby becoming eligible for reproposal---at the
moment all workers in the rejection set $S_{f'}^{C'}$ have become free
(unassigned). Reproposal is not permitted while even one rejector remains
assigned: that worker may prefer their current allocation to $(f',C')$, and
permitting a reproposal would allow a blocking pair to persist.

\begin{algorithm}
\caption{Combinatorial Deferred Acceptance with Reproposals (CDAR)}
\label{alg:gcdar}
\begin{algorithmic}[1]

\State $R_f\gets\emptyset$,\ $\mu(f)\gets\emptyset$\ ($\forall f\in\mc{F}$);
\quad $\mu(w)\gets\emptyset$\ ($\forall w\in\mc{W}$)
\Comment{Initialization}

\While{$\exists f\in\mc{F}$ \textbf{s.t.}
$U_f(\mu(f)) < \max_{C\in\mc{C}_f\setminus R_f} U_f(C)$}

    \State $f\gets\Phi.\mathrm{next}$
    \If{$U_f(\mu(f)) \ge \max_{C\in\mc{C}_f\setminus R_f} U_f(C)$}
        \State \textbf{continue}
    \EndIf

    \State $C\gets \arg\max_{C\in\mc{C}_f\setminus R_f} U_f(C)$

    \State $S_f^C \gets \{w\in C \mid V_w(f,C) \le V_w(\mu(w),\mu(\mu(w)))\}$
    \Comment{Record the rejection set}
    \label{alg:reject}

    \State $R_f\gets R_f\cup\{C\}$
    \Comment{Record the proposal}

    \If{$S_f^C=\emptyset$}
    \Comment{Acceptance}
        \State $\mu(w)\gets\emptyset$\ ($\forall w\in\mu(f)\setminus C$)
        \Comment{Release the old allocation}
        \For{$f'\in\{\mu(w)\mid w\in C\}\setminus\{\emptyset\}$}
            \State $\mu(w')\gets\emptyset$\ ($\forall w'\in\mu(f')$);
            \quad $\mu(f')\gets\emptyset$
            \Comment{Unmatch rival firms}
            \label{alg:unmatch}
        \EndFor
        \State $\mu(f)\gets C$;\quad $\mu(w)\gets f$\ ($\forall w\in C$)
        \Comment{Register the new allocation}
    \EndIf

    \For{$f'\in\mc{F}$,\ $C'\in R_{f'}$}
        \label{alg:repropose-begin}
        \State $S_{f'}^{C'}\gets S_{f'}^{C'}\setminus\{w\in S_{f'}^{C'}\mid \mu(w)=\emptyset\}$
        \State \textbf{if} $S_{f'}^{C'}=\emptyset$ \textbf{then}
        $R_{f'}\gets R_{f'}\setminus\{C'\}$
        \Comment{Permit reproposal}
        \label{alg:repropose-end}
    \EndFor
\EndWhile
\State \Return $\mu$

\end{algorithmic}
\end{algorithm}

\subsection{Convergence and Stability}

\begin{theorem}[Convergence]
\label{thm:converge}
CDAR terminates in finite time.
\end{theorem}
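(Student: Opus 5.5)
The plan is to separate the run of CDAR into \emph{acceptance} steps ($S_f^C=\emptyset$) and \emph{non-acceptance} steps (rejections and \textbf{continue} steps). I will show that there are only finitely many acceptances, using an $N$-digit potential built from the metaranking $\succeq_\chi$ of Definition~\ref{def:ta}. I will then show that between two consecutive acceptances only finitely many iterations can occur.

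\textbf{Potential.} Since $\mc{A}$ is finite and $\succeq_\chi$ is a weak order, map each allocation $x\in\mc{A}$ to its indifference class index $r(x)\in\{1,\dots,K\}$, with larger index meaning higher in $\succeq_\chi$. Give unmatched firms the value $0$. For a matching $\mu$, let $\phi(\mu)$ be the vector $(r(f,\mu(f)))_{f\in\mc{F}}$ sorted in non-increasing order. This is an $N$-digit string over $\{0,\dots,K\}$, and I compare such strings lexicographically.

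\textbf{Key step: every acceptance strictly increases $\phi$.} Suppose firm $f$'s proposal $C$ is accepted. Then every $w\in C$ satisfies $V_w(f,C)>V_w(\mu(w),\mu(\mu(w)))$. Two kinds of previously matched allocations are destroyed.
\begin{itemize}
\item Each rival allocation $(f',\mu(f'))$ that is unmatched shares some worker $w\in C$ with $(f,C)$. For that $w$ we have $V_w(f,C)>V_w(f',\mu(f'))$, so TA gives $(f,C)\succ_\chi(f',\mu(f'))$ strictly.
\item If $\mu(f)\neq\emptyset$, then by Assumption~\ref{asmp:anchor} the anchor $a_f$ lies in both $C$ and $\mu(f)$. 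Since $a_f$ accepted, $V_{a_f}(f,C)>V_{a_f}(f,\mu(f))$, hence $(f,C)\succ_\chi(f,\mu(f))$. This is exactly where the anchor is needed, as Example~\ref{ex:no_anchor} shows $\phi$ can decrease otherwise.
\end{itemize}
The old workers released from $\mu(f)\setminus C$ do not destroy any other allocation. So the multiset of digits changes by deleting some entries (at least one, counting $f$'s own entry, possibly $0$) and inserting a single entry strictly larger than every deleted one. By the standard multiset/lexicographic argument on sorted vectors, the sorted vector strictly increases. Since $\phi$ takes at most $(K+1)^N$ values, there are finitely many acceptances.

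\textbf{Between acceptances.} Between two acceptances no worker's assignment changes, so the set of free workers is fixed. First, any newly recorded rejection set $S_f^C$ contains no free worker: a free worker $w$ has $V_w(\emptyset)=0<V_w(f,C)$ by individual rationality. Consequently, during such a phase the reproposal loop (lines~\ref{alg:repropose-begin}--\ref{alg:repropose-end}) removes nothing from any $R_{f'}$. Entries whose rejectors were freed by the last acceptance were already pruned in that acceptance's own iteration, and new entries have only assigned rejectors. Hence each rejection strictly enlarges some $R_f\subseteq\mc{C}_f$, so there are at most $\sum_f|\mc{C}_f|$ rejections per phase.

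For \textbf{continue} steps: as long as the while-condition holds, some firm $f$ with room for improvement exists. Its room for improvement persists until some proposal occurs, since nothing changes at \textbf{continue} steps. Because $\Phi$ lists every firm infinitely often, $f$ is reached after finitely many steps. Combining the three bounds yields termination.

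The step I expect to be the main obstacle is the lexicographic monotonicity when several rival firms are unmatched at once and ties occur in $\succeq_\chi$. It must be checked carefully that removing many digits and adding one strictly larger digit always increases the \emph{sorted} vector. I would argue this by comparing the two sorted vectors at the first position where the new digit is placed: all earlier digits are unchanged, and at that position the new digit strictly exceeds the old entry there, which was either a deleted digit or a digit no larger than some deleted one.
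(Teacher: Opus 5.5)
Your proof is correct, and its engine is the paper's. You use the same sorted $N$-digit potential over metaranking classes. The anchor $a_f$ pushes $f$'s own digit strictly up. For each unmatched rival $f'$, a worker of $C\cap\mu(f')$ whose strict improvement plus TA gives $(f,C)\succ_\chi(f',\mu(f'))$ handles the rival's digit.

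You differ from the paper in two respects, both improvements.
\begin{itemize}
\item You obtain the lexicographic increase from a multiset fact: every digit that disappears lies strictly below the single inserted digit $v=r(f,C)$, and the rivals' new zeros are harmless. The paper instead pairs a drop at sorted digit $k$ with a rise at some $\ell<k$ by tracking which firm sits at position $k$ before and after. That identification is loose when several rivals are unmatched at once and sorted positions shift; your formulation sidesteps it.
\item The paper's proof stops at ``finitely many updates'' and never bounds the rejection and \textbf{continue} iterations in between. You supply exactly that missing piece. A fresh rejection set never contains a free worker (by individual rationality), so $\sum_f|R_f|$ only grows between acceptances. And $\Phi$ reaches a firm with room for improvement after finitely many \textbf{continue} steps.
\end{itemize}

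One repair is needed in your last paragraph. You say the old entry at the insertion position ``was either a deleted digit or a digit no larger than some deleted one.'' That reason is wrong: take old $(5,3,1)$, delete $1$, insert $4$; the old $3$ at position $2$ is neither. Also, with ties, placing $v$ at the first position carrying value $v$ compares it with an old copy of $v$, so there is no strict increase there.

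The correct argument is this. Let $t$ be the number of old digits that are $\ge v$. None of them is deleted, and $t\le N-1$ because $f$'s own old digit is below $v$. Hence the two sorted vectors agree on positions $1,\dots,t$. Position $t+1$ holds $v$ in the new vector and a digit strictly below $v$ in the old one.
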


The convergence proof uses an $N$-digit potential function based on the ranks
of the metaranking.
Take a rank function $\chi:\mc{A}\to\{1,\dots,m\}$ that numbers the equivalence
classes of $\succeq_\chi$ from the bottom as $1,\dots,m$
(where $m$ is the number of equivalence classes, with
$m\le|\mc{A}|=\sum_{f\in\mc{F}}|\mc{C}_f|$).
For each firm $f\in\mc{F}$, define
\[
P_f(\mu):=
\begin{cases}
\chi(f,\mu(f)) & \mu(f)\neq\emptyset\\[2pt]
0 & \mu(f)=\emptyset
\end{cases}
\]
and let $P_{(1)}\ge P_{(2)}\ge\dots\ge P_{(N)}$ denote $\{P_f(\mu)\}_{f\in\mc{F}}$
arranged in nonincreasing order. Define the potential function as
\[
\phi(\mu)
=
\sum_{k=1}^{N}
P_{(k)}\,(m+1)^{N-k}.
\]
That is, $\phi(\mu)$ interprets $(P_{(1)},\dots,P_{(N)})$ as an $N$-digit
number in base $(m+1)$ (Fig.~\ref{fig:TAnoAnchor}).
Each digit $P_{(k)}$ takes integer values in $\{0,1,\dots,m\}$ and is weighted
in base $(m+1)$, so no carrying occurs and the value of a higher digit always
dominates the total sum of all lower digits.
Moreover, $\phi\ge0$ and $\phi$ is bounded above.

\begin{lemma}
\label{lem:potential}
In each iteration of the algorithm, if $\mu$ is updated, then $\phi(\mu)$ strictly increases.
\end{lemma}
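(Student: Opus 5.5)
The plan is to track how the multiset of digits $\{P_g(\mu)\}_{g\in\mc{F}}$ changes in one iteration, and to show that the sorted digit vector increases lexicographically. Since each digit lies in $\{0,\dots,m\}$ and the base is $m+1$, lexicographic order on $(P_{(1)},\dots,P_{(N)})$ coincides with numeric order of $\phi$. So it suffices to prove the following: the update replaces a collection of digits that are all strictly smaller than a value $v:=\chi(f,C)\ge 1$ by zeros together with one copy of $v$, and leaves every other digit unchanged.

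First observe that $\mu$ is updated only in the acceptance branch ($S_f^C=\emptyset$). There, every $w\in C$ satisfies $V_w(f,C)>V_w(\mu(w),\mu(\mu(w)))$. Let $F'$ be the set of rival firms $\{\mu(w)\mid w\in C\}\setminus\{\emptyset\}$, together with $f$ itself.
\begin{itemize}
\item For each $f'\in F'$ with $f'\neq f$, pick $w\in C\cap\mu(f')$. Then $w$ belongs to both $(f,C)$ and $(f',\mu(f'))$, and TA converts the strict worker preference into $(f,C)\succ_\chi(f',\mu(f'))$. Hence $P_{f'}(\mu)<v$.
\item For $f$ itself we use the anchor condition (Assumption~\ref{asmp:anchor}). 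If $\mu(f)\neq\emptyset$, then $a_f\in\mu(f)\cap C$, so $a_f$'s current allocation is exactly $(f,\mu(f))$. Its strict preference and TA give $\chi(f,\mu(f))<\chi(f,C)$.
\item If $\mu(f)=\emptyset$, then $P_f=0<1\le v$.
\end{itemize}
This anchor step is where the anchor is indispensable: without it $f$ could move to a coalition ranked lower in $\succeq_\chi$, exactly as in Example~\ref{ex:no_anchor}.

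Next I check that no other digit moves. Releasing the workers in $\mu(f)\setminus C$ only changes those workers' states, not any firm's coalition. Firms outside $F'\cup\{f\}$ keep their coalitions. After the update, firms in $F'\setminus\{f\}$ have digit $0$, and $f$ has digit $v$. The case where $f$ itself appears among the ``rivals'' because $\mu(f)\cap C\neq\emptyset$ is harmless: $f$ is unmatched and then immediately reassigned $C$. The reproposal loop (lines~\ref{alg:repropose-begin}--\ref{alg:repropose-end}) modifies only $R_{f'}$ and $S_{f'}^{C'}$, not $\mu$.

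Finally, I do the lexicographic comparison, which I expect to be the only fiddly step. Let $k$ be the number of digits of the old vector that are $\ge v$; all of these belong to untouched firms. In the new sorted vector the first $k$ entries are identical to the old ones. The $(k+1)$-st new entry is $v$, since the new multiset contains those $k$ entries plus $v$, and every other entry is either unchanged and $<v$ or equal to $0$. The $(k+1)$-st old entry is $<v$, or does not exist. Note that $k<N$, because $f$'s old digit is $<v$. Hence the new vector is lexicographically larger, and the base-$(m+1)$ encoding yields $\phi(\mu_{\text{new}})>\phi(\mu_{\text{old}})$, since a higher digit dominates the sum of all lower digits.
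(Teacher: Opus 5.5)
Your proof is correct and follows the same route as the paper's. The anchor $a_f$ together with TA shows that $f$'s digit strictly rises. TA applied to a transferred worker shows that every displaced rival's old digit lies strictly below $\chi(f,C)$. The sorted digit vector therefore increases lexicographically.

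Your final bookkeeping is tighter than the paper's. You note that every touched digit is below $v$ and is replaced by zeros plus a single $v$, so the first change in the sorted vector is an increase at position $k+1$. The paper instead argues that each decrease at position $k$ is matched by an increase at some $\ell<k$, and it loosely identifies sorted positions with particular firms; your version avoids that.
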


\begin{proof}[Proof sketch]
By the anchor condition, any two feasible coalitions of firm $f$ share $a_f$,
and upon acceptance (line~\ref{alg:reject}) $a_f$ strictly improves; hence by
Definition~\ref{def:ta} no change that decreases $P_f$ is accepted.
Hence a decrease in $P_f$ can occur only when a worker is taken away by another
firm (line~\ref{alg:unmatch}).
In that case the transferred worker belongs to both the old and the new
allocation, so by that worker's strict improvement and Definition~\ref{def:ta},
the digit of the firm that took the worker strictly exceeds the pre-update
digit of the firm that lost the worker.
Therefore, a decrease in the $k$-th digit is necessarily accompanied by a
strict increase in some higher digit $\ell<k$, and $(P_{(1)},\dots,P_{(N)})$
strictly increases in lexicographic order.
The complete proof is given in Appendix \ref{sec:proof-potential}.
\end{proof}

Combined with the boundedness of $\phi$ from above,
this completes the proof of Theorem~\ref{thm:converge}.

\begin{proof}[Proof of Theorem~\ref{thm:converge}]
By Lemma~\ref{lem:potential}, $\phi(\mu)$ strictly increases each time $\mu$ is updated.
Since $\phi$ is bounded above, CDAR terminates after finitely many updates.
\end{proof}

\begin{theorem}[Stability]
\label{thm:stable}
CDAR returns a stable matching $\mu$.
\end{theorem}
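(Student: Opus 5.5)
We argue by contradiction. By Theorem~\ref{thm:converge}, CDAR halts, so let $\mu$ be the returned matching. First we check that $\mu$ is a matching in the sense of Definition~\ref{def:matching}. The acceptance branch releases every worker of $f$'s old coalition, and of every rival coalition touched, before registering $(f,C)$. Hence each firm holds at most one feasible coalition and the worker--firm correspondence stays consistent. Now suppose some $(f,C)$ with $C\in\mc{C}_f$ blocks $\mu$, i.e.\ \eqref{eq:block_firm} and \eqref{eq:block_worker} hold. The while-condition fails at termination, so $U_f(\mu(f))\ge\max_{C''\in\mc{C}_f\setminus R_f}U_f(C'')$. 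Since $U_f(C)>U_f(\mu(f))$, this forces $C\in R_f$ at termination. So $C$ was proposed at some iteration $t$ and has not been removed from $R_f$ since its last proposal.

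Next we split on what happened at that last proposal $t$. If it was \emph{accepted} ($S_f^C=\emptyset$), then $\mu(f)$ became $C$ at $t$. Since $\mu(f)\neq C$ at the end (or $C$ would not block), $f$ later lost $C$. This happened either because $f$ accepted another coalition $C''$, or because a rival took a worker of $C$ via line~\ref{alg:unmatch}. In the first case $f$ proposes in decreasing $U_f$ order over $\mc{C}_f\setminus R_f$, so $U_f(C'')\le U_f(C)$ would need care. The cleaner route is to track $U_f(\mu(f))$ directly. In the second case the whole coalition $C$ is freed, so every worker of $C$ becomes unassigned. We would then need the reproposal rule to re-enable $C$. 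Its rejection set is empty, however, so we should treat accepted proposals as having $S_f^C=\emptyset$ and hence be removed from $R_f$ immediately at line~\ref{alg:repropose-end}. This shows that an accepted coalition never stays in $R_f$, which contradicts $C\in R_f$. So the last proposal of $C$ was a \emph{rejection} with nonempty $S_f^C$, and at termination the current $S_f^C$ is still nonempty. Thus some $w\in S_f^C$ has been continuously assigned since $t$, because the set only shrinks when members become free and is never refilled.

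The key step is a monotonicity claim. While a worker $w$ stays continuously assigned, $V_w$ of its current allocation never decreases. Indeed, $w$ changes allocation without passing through $\emptyset$ only by accepting a proposal, which requires strict improvement for $w$. Being released always sets $\mu(w)=\emptyset$, which breaks continuity. Consequently
\[
V_w(\mu(w),\mu(\mu(w)))\ \ge\ V_w^{(t)}\ \ge\ V_w(f,C),
\]
where $V_w^{(t)}$ is $w$'s utility at time $t$ and the second inequality is the defining property of $w\in S_f^C$. This contradicts \eqref{eq:block_worker} for $w$, so no blocking pair exists and $\mu$ is stable.

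The main obstacle is the bookkeeping in the second paragraph. We must argue carefully that (i) an accepted proposal has its (empty) rejection set and is therefore removed from $R_f$ in the same iteration, and (ii) for a rejected proposal, a worker whose membership in $S_f^C$ persists was truly never unassigned in between. A subtle point is that removal in line~\ref{alg:repropose-begin} is checked only at the end of each iteration. We would verify that a worker freed and re-hired within a single iteration is impossible: freed workers are re-hired only if they belong to the new coalition $C$, and in that case they are assigned via an accepted strict improvement. If this edge case does occur, we would handle it by showing that the worker still strictly improved, so the inequality chain above remains valid. TA and the anchor condition are not needed here beyond Theorem~\ref{thm:converge}.
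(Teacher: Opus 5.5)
Your argument is correct and is essentially the paper's proof: $C\in R_f$ at termination, the last proposal of $C$ was a rejection, and the rejection set only loses members who are unassigned at the end of an iteration, so some rejector stays assigned throughout. Such a worker's utility changes only by strict improvements, which contradicts \eqref{eq:block_worker}.

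You are more careful than the paper in deriving $C\in R_f$ from the failed while-condition, and in ruling out an accepted last proposal (its empty $S_f^C$ removes it from $R_f$ in the same iteration). The one blemish is that release-and-rehire within a single iteration does occur, for workers of the accepted coalition taken from rivals. Your fallback, that such workers strictly improve, already covers this case.
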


\begin{proof}[Proof sketch]
By Theorem~\ref{thm:converge}, CDAR terminates and returns a matching $\mu$; it
remains to show that $\mu$ is stable. Suppose, for contradiction, that a
blocking pair $(f,C)$ exists.
Since $\mu(f)\neq C$, firm $f$ once proposed $C$ and was rejected, so $C\in R_f$.
By the reproposal rule, while $C\in R_f$ the rejection set $S_f^C$ only shrinks
monotonically, so there exists a worker $w\in S_f^C$ who remains assigned until
the algorithm terminates.
Since $w$ never accepts an allocation change that strictly decreases their
utility, $V_w(\mu(w),\mu(\mu(w)))\ge V_w(f,C)$ holds,
contradicting the blocking condition $V_w(f,C)>V_w(\mu(w),\mu(\mu(w)))$.
The complete proof is given in Appendix \ref{sec:proof-stable}.
\end{proof}

Theorem~\ref{thm:stable} implies Theorem~\ref{thm:existence}.

\subsection{Computational Complexity}
\label{sec:complexity}

The complexity of CDAR is evaluated as the product of the number of iterations
and the cost of each iteration.
By Lemma~\ref{lem:potential}, $\phi(\mu)$ strictly increases in each iteration.
Since $\phi(\mu)$ is a \emph{sorted} $N$-digit number, the number of iterations
is bounded by $\binom{N+m}{N}$.
Each iteration consists of three operations: selecting the best not-yet-proposed
coalition, updating the rejection sets, and checking reproposals.
For general problem instances $|\mc{C}_f|$ can be exponential, so selecting the
best not-yet-proposed coalition can be the dominant cost.

On the other hand, in applications where the feasible coalitions of a firm are
implicitly defined as a set of paths on a graph---spatio-temporal routes in
traffic, or node sequences in service chaining---each iteration can be processed
efficiently without explicit enumeration by applying Yen's
algorithm~\cite{yen1971finding}, which sequentially enumerates the best
not-yet-proposed paths.
Furthermore, in the traffic application (see Section \ref{sec:model}), we show that
$|\mc{C}_f|$ itself is of polynomial order under the following structural
assumption.

\begin{assumption}[Departure-only collision resolution]
\label{asmp:departure-only}
When a collision occurs along a route,
it is resolved only by delaying the departure time at the anchor slot
(the departure slot $(O_f,T_f^0)$), and no waiting occurs at intermediate
intersections along the route.
\end{assumption}

Since the elements of $\mc{C}_f$ are slot sequences (pairs of a node sequence and
a departure time), even an identical node sequence (route) can be counted as
distinct elements when the departure delays differ. We therefore let $K$ denote
an upper bound on the number of routes of every vehicle.
By Assumption~\ref{asmp:departure-only}, there is no need to enumerate the
combinations of stopping patterns at intermediate intersections along a route
(in general exponential in the route length). Moreover, since collisions can
always be resolved by departure delays, the candidates to be tried in the
proposal phase are limited to the $K$ routes and their feasibilization.

\begin{theorem}[Complexity in the traffic application]
\label{thm:traffic-complexity}
Under Assumption~\ref{asmp:departure-only}, the total complexity of CDAR is
\[
O\!\left(K(|\mc{V}|^3+N|\mc{V}|)\cdot N^{m}\right),
\]
which is polynomial in $N$ for fixed $m$.
\end{theorem}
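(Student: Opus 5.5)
The bound will be a product of two factors: the number of iterations that change $\mu$, and the cost of one iteration.

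\textbf{Iteration count.} By Lemma~\ref{lem:potential}, every update of $\mu$ strictly increases $\phi$. The quantity $\phi(\mu)$ is determined by the sorted digit vector $(P_{(1)},\dots,P_{(N)})$, a nonincreasing sequence with entries in $\{0,\dots,m\}$. There are $\binom{N+m}{N}$ such sequences. For fixed $m$,
\[
\binom{N+m}{N}=\binom{N+m}{m}\le \frac{(N+m)^m}{m!}=O(N^m),
\]
so the number of updates is $O(N^m)$.

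\textbf{Non-updating iterations.} We also need to bound the iterations that do not change $\mu$, namely rejected proposals and \textbf{continue} steps. Between two consecutive updates, no worker becomes free, so no $S_{f'}^{C'}$ shrinks and nothing is removed from any $R_{f'}$. Each rejected proposal therefore permanently adds a coalition to some $R_f$ until the next update. Under Assumption~\ref{asmp:departure-only}, each firm tries only its $K$ routes (each together with its feasibilization). Hence there are at most $NK$ rejections between updates. I would absorb the $N$ factor by counting per firm: between updates, each firm $f$ makes at most $K$ proposals. The \textbf{continue} steps cost $O(1)$ each. Since $\Phi$ only needs to be fair, I would treat these as charged to the proposing firm and assume $\Phi$ skips inactive firms, so they do not change the order of the bound.

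\textbf{Per-proposal cost.} A single proposal has two parts.
\begin{itemize}
\item Finding the best not-yet-proposed route and feasibilizing it. Computing the candidate route with a shortest-path or Floyd--Warshall/Yen-type routine costs $O(|\mc{V}|^3)$. Under Assumption~\ref{asmp:departure-only}, feasibilization only delays the departure slot until no collision remains. Each delay check compares the route's slots (at most $|\mc{V}|$ of them, since a route is a simple path) against the routes of the at most $N$ other matched vehicles. This costs $O(N|\mc{V}|)$ per check.
\item Computing $S_f^C$, releasing and unmatching, and updating the rejection sets. Computing $S_f^C$ takes $O(|\mc{V}|)$. Releasing and unmatching take $O(N|\mc{V}|)$. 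Updating the rejection sets touches $O(NK)$ stored sets, each of size $O(|\mc{V}|)$.
\end{itemize}
Multiplying the $K$ candidate routes by this per-proposal cost $O(|\mc{V}|^3+N|\mc{V}|)$, and then by the $O(N^m)$ updates, gives
\[
O\!\left(K(|\mc{V}|^3+N|\mc{V}|)\cdot N^{m}\right).
\]

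\textbf{Main obstacle.} The hard step is justifying that the feasibilization costs only $O(N|\mc{V}|)$ and yields a bounded number of departure-delay candidates per route. The number of admissible delays must be bounded so that the $K$ routes and their delayed variants contribute only the factor $K$ (with delays folded into the time horizon). My first attempt would argue this from Assumption~\ref{asmp:departure-only}: since no waiting occurs at intermediate intersections, the route's slot sequence is a rigid shift in time. Each of the at most $N|\mc{V}|$ occupied slots of other vehicles excludes at most one shift for each node of the route. The first non-excluded shift can then be found by one sweep over the occupied slots. The rejection-set bookkeeping must be checked to stay within the same bound, which I expect to be routine.
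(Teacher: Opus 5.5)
\textbf{Gap: the ``absorb the $N$ factor'' step.} Your decomposition is the paper's. You bound the number of $\mu$-updates by the number of sorted digit vectors, $\binom{N+m}{N}=O(N^m)$. You then charge each update $O(K(|\mc{V}|^3+N|\mc{V}|))$ for one firm's $K$ Yen-enumerated routes plus departure-delay feasibilization. Your rigid-shift count is a sharper version of the paper's departure-delay lemma: an occupied slot $(v,t)$ rules out only the one shift that places the route at $v$ at time $t$, so the first admissible delay is at most $N|\mc{V}|$ and is found in one sweep.

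The step that does not go through is ``absorb the $N$ factor by counting per firm.'' You correctly note that between consecutive updates no worker is freed, so nothing leaves any $R_{f'}$. It follows that each of the $N$ firms can be rejected up to $K$ times. That gives up to $NK$ rejected proposals between consecutive updates, each charged $O(|\mc{V}|^3+N|\mc{V}|)$ in your own accounting, and regrouping them by firm does not shrink the total. Because $\binom{N+m}{m}\ge N^m/m!$, the iteration bound has no slack to hide this. What your argument actually proves is $O(K(|\mc{V}|^3+N|\mc{V}|)\,N^{m+1})$.

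Reaching the stated bound would need a real amortization showing that the total number of rejected proposals over the whole run is $O(KN^m)$. Charging each rejection to the update that re-enables it does not give this. A single update can free the workers of several firms and so re-enable coalitions of many firms at once.

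\textbf{How this compares to the paper.} The paper's proof does not close this point either. It identifies iterations with $\phi$-increases and charges each one the cost of a single firm's $K$-candidate enumeration. It never counts other firms' rejected proposals, nor the \textbf{continue} steps, which you handle by assuming $\Phi$ skips satisfied firms. So your argument matches the paper's exactly when you adopt that convention, and you should state it as a convention rather than as something derived.

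\textbf{Smaller slip.} The $O(NK|\mc{V}|)$ rejection-set scan you list as a per-proposal cost disappears when you multiply. This is harmless: by your own observation the scan need only run after updates, where $O(NK|\mc{V}|)\subseteq O(K\cdot N|\mc{V}|)$.
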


The proof is given in Appendix~\ref{sec:proof-traffic-complexity}.
As Theorem~\ref{thm:traffic-complexity} shows, CDAR is slice-wise polynomial
(XP) in $m$.
In particular, since $H(C)$ is an integer count whose range is determined by the
network and independent of the number of vehicles $N$, the value of $m$ does not
grow with $N$, so on any fixed network CDAR runs in time polynomial in $N$.
Furthermore, since a feasible solution exists whenever $K\geq 1$, CDAR can
control its computational cost without sacrificing theoretical correctness and
without exploring all routes.

Also,
the following shows that CDAR in fact functions on, and thus subsumes,
the class of \citet{pycia2012stability}.

\begin{proposition}
\label{prop:cdar-pycia}
Given a many-to-one matching instance that satisfies PA in the sense of
\citet{pycia2012stability} (i.e., for which a metaranking $\chi$ exists),
CDAR converges in finite time and outputs a pairwise stable matching.
\end{proposition}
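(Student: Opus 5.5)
We prove the statement by reduction. We map a Pycia instance to an instance of MCM that satisfies the anchor condition and TA. We then invoke Theorems~\ref{thm:converge} and~\ref{thm:stable}, and finally translate stability back into pairwise stability in the original instance.

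\emph{Step 1: the reduction.} In the setting of \citet{pycia2012stability}, each firm $f$ and each worker $w$ has preferences over coalitions $C\ni w$ matched to $f$, induced by the division rule. PA guarantees a metaranking $\chi$ on allocations, consistent with the rankings of every agent in the allocation, firms included. We set $U_f(C)$ to any positive utility representing $f$'s ranking of its feasible coalitions, and $V_w(f,C)$ to a positive representation of $w$'s ranking. Individual rationality in the Pycia model gives positivity after a constant shift; unacceptable coalitions are dropped from $\mc{C}_f$. Because $\chi$ agrees with every worker's ranking on the allocations containing that worker, TA (Definition~\ref{def:ta}) holds with $\succeq_\chi:=\chi$.

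\emph{Step 2: securing the anchor.} The Pycia instance need not have anchor workers, so we apply Remark~\ref{rem:wlog}. To every $C\in\mc{C}_f$ we add a dummy worker $d_f$ and set $V_{d_f}(f,C):=\chi(f,C)$. Then $d_f$ is the anchor of $f$. Since $d_f$ appears only in allocations of $f$, and its utility is the rank function itself, TA is preserved with the same metaranking. The augmented instance therefore satisfies both hypotheses. Theorem~\ref{thm:converge} gives finite termination, and Theorem~\ref{thm:stable} gives a matching $\hat\mu$ with no blocking pair in the augmented instance.

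\emph{Step 3: translating back.} This is the main obstacle. We define $\mu$ by deleting the dummies from $\hat\mu$. Suppose $(f,C)$ blocks $\mu$ in the Pycia sense: $f$ strictly prefers $C$ and every $w\in C$ strictly prefers $(f,C)$. We must show that $(f,C\cup\{d_f\})$ blocks $\hat\mu$, which requires that $d_f$ also strictly improves, i.e.\ $\chi(f,C)>\chi(f,\hat\mu(f))$.

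\begin{itemize}
\item If $\hat\mu(f)=\emptyset$, the claim holds trivially.
\item Otherwise, $f$ lies in both allocations and strictly prefers $(f,C)$. Because PA, unlike TA, constrains firms too, $\chi$ is consistent with $f$'s ranking, so $\chi(f,C)>\chi(f,\hat\mu(f))$.
\end{itemize}

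Hence $d_f$ strictly improves, contradicting Theorem~\ref{thm:stable}. This is precisely where we use that PA is stronger than TA: firm–metaranking agreement makes the dummy's preferences harmless.

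\emph{Pairwise stability.} Pycia's notion of pairwise stability, blocking by a firm together with a coalition, coincides with Definition~\ref{def:blocking} once the dummies are removed. The only care needed is bookkeeping about which coalitions are feasible for which firm, so that the correspondence between Pycia's coalitions and $\mc{C}_f$ is a bijection.
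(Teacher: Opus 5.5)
Your proof is correct and follows essentially the paper's route: reduce to MCM, use the metaranking $\chi$ as the TA metaranking, supply an anchor whose preferences coincide with $\chi$, and invoke Theorems~\ref{thm:converge} and~\ref{thm:stable}. The only differences are that the paper takes the firm itself as its anchor ($a_f=f$, whose ranking agrees with $\chi$ under PA) where you add a dummy $d_f$ with $V_{d_f}=\chi$, and that your Step~3 makes explicit the translation-back argument the paper leaves implicit: firm--metaranking agreement lifts any blocking pair of the original instance to one of the augmented instance.
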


See Appendix \ref{sec:proof-cdar-pycia} for the proof. The proof proceeds by a
reduction from problem instances of \citet{pycia2012stability} to problem instances of
this paper.
In particular, when the metaranking of \citet{pycia2012stability} can be computed in
polynomial time for each coalition (e.g., when a division rule such as equal
split is used), the reduction is polynomial-time, showing that the problem class
of this paper is at least as hard as that of \citet{pycia2012stability}.

\section{Pareto Optimality}
\label{sec:pareto}

In this section, we show that under strict preferences, any stable
matching is Pareto optimal within the set of stable matchings---and in
particular that this holds for the output of CDAR.
This property holds for stable matchings in general; it extends the known result
for substitutable preferences (\citet{gale1962college,roth1990two}) to the
setting of this paper, which permits complementarity.

Define the payoff of each agent $i\in\mc{I}$ under a matching $\mu$ uniformly as
\[
\pi_i(\mu) =
\begin{cases}
U_f(\mu(f)) & \text{if } i=f\in\mc{F},\\
V_w(\mu(w),\mu(\mu(w))) & \text{if } i=w\in\mc{W}.
\end{cases}
\]

\begin{definition}[Pareto domination]
\label{def:pareto_dom}
A matching $\mu'$ Pareto dominates a matching $\mu$ if
$\pi_i(\mu')\ge\pi_i(\mu)$ for all $i\in\mc{I}$,
and $\pi_j(\mu')>\pi_j(\mu)$ for some $j\in\mc{I}$.
In this case, $j$ is called a \emph{dominating agent} of $\mu$.
\end{definition}

\begin{definition}[Pareto optimality]
\label{def:pareto_opt}
A matching $\mu$ is \emph{Pareto optimal} if there exists no other matching that Pareto dominates $\mu$.
\end{definition}

In the following, we restrict attention to the case in which both $\mu$ and $\mu'$ are stable matchings ($\mu,\mu'\in\mc{M}$).

By Theorem~\ref{thm:stable}, the output of CDAR has no blocking pair.
Therefore, a violation of Pareto optimality---i.e., the existence of another
stable matching that Pareto dominates the output of CDAR---can arise only in the
presence of ties. The following lemma formalizes this.

\begin{lemma}
\label{lem:pareto}
Let $\mu,\mu'\in\mc{M}$ be stable matchings and suppose $\mu'$ Pareto dominates $\mu$.
Let $i$ be a dominating agent of $\mu$.
If $i$ is a worker, then setting $j:=\mu'(i)$, there exists some $k\in\{j\}\cup\mu'(j)$ with $\pi_k(\mu')=\pi_k(\mu)$.
If $i$ is a firm, then there exists some $k\in\mu'(i)$ with $\pi_k(\mu')=\pi_k(\mu)$.
\end{lemma}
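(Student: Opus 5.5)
The plan is to argue by contradiction, using only the stability of $\mu$. We combine Pareto domination (every agent weakly gains in $\mu'$) with the absence of ties in the relevant coalition, and exhibit a blocking pair for $\mu$. In both cases the candidate blocking pair is the allocation that the dominating agent enjoys in $\mu'$.

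\emph{Case 1: $i$ is a firm.} Since $\pi_i(\mu')>\pi_i(\mu)\ge 0$, we have $\mu'(i)\neq\emptyset$, so $C:=\mu'(i)\in\mc{C}_i$. Also $U_i(C)>U_i(\mu(i))$, which is exactly condition \eqref{eq:block_firm} for the pair $(i,C)$ against $\mu$. Suppose, for contradiction, that no $k\in C$ has $\pi_k(\mu')=\pi_k(\mu)$. Pareto domination gives $\pi_k(\mu')\ge\pi_k(\mu)$ for every $k$, so for every $w\in C$ we get $V_w(i,C)=\pi_w(\mu')>\pi_w(\mu)=V_w(\mu(w),\mu(\mu(w)))$, using the unemployment convention of Definition~\ref{def:blocking} when $w$ is unmatched in $\mu$. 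This is condition \eqref{eq:block_worker}, so $(i,C)$ blocks $\mu$, contradicting $\mu\in\mc{M}$. Hence some $k\in\mu'(i)$ has equal payoff.

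\emph{Case 2: $i$ is a worker.} Again $\pi_i(\mu')>\pi_i(\mu)\ge0$ together with $V_w(\emptyset)=0$ forces $i$ to be employed in $\mu'$. Let $j$ be the firm with $\mu'(i)=\{j\}$ (identifying the singleton with its element) and set $C:=\mu'(j)\ni i$. Suppose, for contradiction, that every $k\in\{j\}\cup C$ has $\pi_k(\mu')\neq\pi_k(\mu)$. Pareto domination then makes all these inequalities strict. So $U_j(C)>U_j(\mu(j))$, and for all $w\in C$ we have $V_w(j,C)>V_w(\mu(w),\mu(\mu(w)))$. Thus $(j,C)$ blocks $\mu$, again a contradiction. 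Note that $i\in C$ itself already gains strictly, so the tie must occur at $j$ or at some other member of $C$; this is consistent with the statement, which allows $k$ anywhere in $\{j\}\cup\mu'(j)$.

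The argument uses only the stability of $\mu$; stability of $\mu'$ is not needed. The only delicate points are bookkeeping. First, the dominating agent must be matched in $\mu'$, which follows from nonnegativity and individual rationality. Second, the unemployment convention must be applied correctly when comparing against $\mu$. Third, $\mu'(j)$ must be a feasible coalition in $\mc{C}_j$, which holds by Definition~\ref{def:matching}. I expect no real obstacle beyond handling the notation $\mu'(i)=\{j\}$ carefully.
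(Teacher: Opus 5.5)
Your proposal is correct and matches the paper's proof: in each case you assume every agent in the relevant $\mu'$-allocation strictly gains, and you conclude that $(i,\mu'(i))$ or $(j,\mu'(j))$ blocks $\mu$. Your additional bookkeeping is accurate and makes explicit details the paper leaves implicit: the dominating agent must be matched in $\mu'$ by nonnegativity, the unemployment convention applies, and only the stability of $\mu$ is used.
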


The proof is given in Appendix \ref{sec:proof-pareto}.
Consequently, under strict preferences CDAR guarantees Pareto optimality.

\begin{theorem}[Pareto optimality]
\label{thm:pareto}
If there are no ties in the agents' payoffs,
any stable matching $\mu\in\mc{M}$ is Pareto optimal within $\mc{M}$.
In particular, the output of CDAR also satisfies this property.
\end{theorem}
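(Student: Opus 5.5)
We derive Theorem~\ref{thm:pareto} from Lemma~\ref{lem:pareto} by contradiction. Suppose there are no ties, $\mu\in\mc{M}$, and some $\mu'\in\mc{M}$ Pareto dominates $\mu$. Fix a dominating agent $i$, so $\pi_i(\mu')>\pi_i(\mu)$. Lemma~\ref{lem:pareto} then yields an agent $k$ with $\pi_k(\mu')=\pi_k(\mu)$. This $k$ is $j=\mu'(i)$ or a member of $\mu'(j)$ when $i$ is a worker, and a member of $\mu'(i)$ when $i$ is a firm. We show that, under strict preferences, this equality forces $\mu$ and $\mu'$ to agree where they were supposed to differ.

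The key step is to make ``no ties'' precise so that equality of payoffs forces equality of allocations. We read the hypothesis as follows: each $U_f$ is injective on $\mc{C}_f\cup\{\emptyset\}$, and each $V_w$ is injective on its domain (allocations containing $w$, together with $\emptyset$). Then $\pi_k(\mu')=\pi_k(\mu)$ gives:
\begin{itemize}
\item if $k$ is a firm, $\mu'(k)=\mu(k)$;
\item if $k$ is a worker, the allocation of $k$ is the same in $\mu$ and $\mu'$, i.e.\ $\mu(k)=\mu'(k)$ and $\mu(\mu(k))=\mu'(\mu'(k))$.
\end{itemize}

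Now we trace back to $i$.
\begin{itemize}
\item \emph{$i$ is a firm.} Then $k\in\mu'(i)$ is a worker whose $\mu'$-allocation is $(i,\mu'(i))$. Equality gives $\mu(k)=\{i\}$ and $\mu(i)=\mu'(i)$, so $\pi_i(\mu)=\pi_i(\mu')$, a contradiction.
\item \emph{$i$ is a worker and $k=j$.} Then $\mu(j)=\mu'(j)$, and $i\in\mu'(j)$, so $i$ has the same allocation in both matchings and $\pi_i(\mu)=\pi_i(\mu')$, a contradiction.
\item \emph{$i$ is a worker and $k\in\mu'(j)$.} Then $k$'s allocation $(j,\mu'(j))$ is also its $\mu$-allocation, so $\mu(j)=\mu'(j)\ni i$, and we reach the same contradiction.
\end{itemize}

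One edge case needs care: if $i$ is a worker unmatched in $\mu'$, then $\pi_i(\mu')=0\le\pi_i(\mu)$ by individual rationality, so $i$ cannot be dominating. Hence $\mu'(i)\neq\emptyset$ and $j$ is well defined, matching the lemma's implicit assumption.

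The main obstacle is this interpretation of ``no ties''. Injectivity must hold on the whole domain, including against $\emptyset$; individual rationality already gives strictness against $\emptyset$. It must also cover a worker comparing the same coalition under different firms. We will state this convention explicitly. The final clause of the theorem, concerning the output of CDAR, follows immediately because that output lies in $\mc{M}$ by Theorem~\ref{thm:stable}.
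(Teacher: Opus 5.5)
Your proposal is correct and follows the paper's route: you derive a contradiction from Lemma~\ref{lem:pareto} by showing that, without ties, no dominating agent can exist. The paper asserts this in a single line, whereas you supply the step it leaves implicit: you read ``no ties'' as injectivity of each $U_f$ and $V_w$ on its own domain, and you trace back to show that the lemma's payoff equality forces $\mu(j)=\mu'(j)$ (resp.\ $\mu(i)=\mu'(i)$), hence $\pi_i(\mu)=\pi_i(\mu')$.
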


\begin{proof}
When no ties exist, by Lemma~\ref{lem:pareto} no dominating agent exists.
Hence no stable matching Pareto dominates $\mu$, and $\mu$ is Pareto optimal.
\end{proof}

When ties exist, a stable matching may fail to be Pareto optimal (see Appendix \ref{sec:non_pareto}).

One way to mitigate ties is to use the extended payoffs
\begin{align*}
\tilde{\pi}_f(\mu) &= \pi_f(\mu)\cdot(V_{\max}+1) + \pi_{a_f}(\mu),\\
\tilde{\pi}_w(\mu) &= \pi_w(\mu)\cdot(U_{\max}+1) + \pi_{\mu(w)}(\mu)
\end{align*}
(where $U_{\max}=\max_{f\in\mc{F},C\in\mc{C}_f}U_f(C)$).
This is a two-digit value that gives first priority to the original payoff and
breaks ties by a secondary utility value, and it is effective in reducing ties in
practice. However, even this extended payoff does not necessarily remove all ties.

\section{Numerical Verification in the Traffic Application}
\label{sec:experiments}
 
Here we synthesize traffic instances and numerically verify the effectiveness of
our algorithm and its applicability to real problems.
 
\paragraph{Experimental setup.}
On a $5\times 5$ grid road network (25 intersections, 40 links),
we generated synthetic data with numbers of vehicles
$N\in\{5,10,20,30,40,50,100\}$ and 10 random seeds.
The pedestrian count of each link was drawn from the uniform distribution on $\{0,1,\dots,10\}$.
Following the correspondence to the traffic application in Section \ref{sec:model},
the departure slot $(O_f, T^0_f)$ of each vehicle $f$ was routed through a
dedicated dummy node $d_{O_f}$ connected by a dummy edge of length 0 and
pedestrian count 0, thereby guaranteeing the anchor condition by construction.
Each OD pair was chosen uniformly at random from the intersection set $\mc{V}$,
subject to the shortest path between the two having at least 2 links.
Feasible coalitions were defined by shifting departure times whenever
collisions at intersections occurred.
The utilities were $U_f(C)=-T(C)$ and $V_w(f,C)=-H(C)$.
Since $-H$ is a metaranking, TA holds.
No upper limit was imposed on the number of candidate routes $K$.
 
\paragraph{Results.}
For all $N$ and all seeds, CDAR terminated in finite time and output a stable
matching.
Figure~\ref{fig:tcdar} compares the running time of CDAR with the upper bound of
Theorem~\ref{thm:traffic-complexity}.
The theoretical curve is drawn with $K=1$ and $|\mc{V}|=25$, using $m=80$,
an upper bound on the number of distinct cost levels obtained as the maximum
route length ($8$) times the per-link pedestrian cap ($10$). Its proportionality constant was obtained by a least-squares fit as
$c\approx 3.41\times10^{-31}$; since $K\geq 1$ in practice, this curve is
stricter than the theoretical upper bound.
For $N\geq 30$ the running time falls below the bound.
Moreover, fitting the actual number of iterations yields $O(N^{4.8})$,
below the bound $O(N^{m})$.
Furthermore, as an ablation, we compared with a DA in which reproposals are
disabled (removing lines~\ref{alg:repropose-begin}--\ref{alg:repropose-end} of
Algorithm~\ref{alg:gcdar}).
Since this DA proposes each coalition at most once, it always terminates, but
its output retains blocking pairs.
Figure~\ref{fig:nblock} shows their number, which increases monotonically with
$N$.
By contrast, CDAR always has $0$ residual blocking pairs by
Theorem~\ref{thm:stable}, which demonstrates the effectiveness of the reproposal
rule.
 
\begin{figure}[t]
\centering
\begin{subfigure}[t]{.49\columnwidth}
  \centering
  \includegraphics[width=\linewidth]{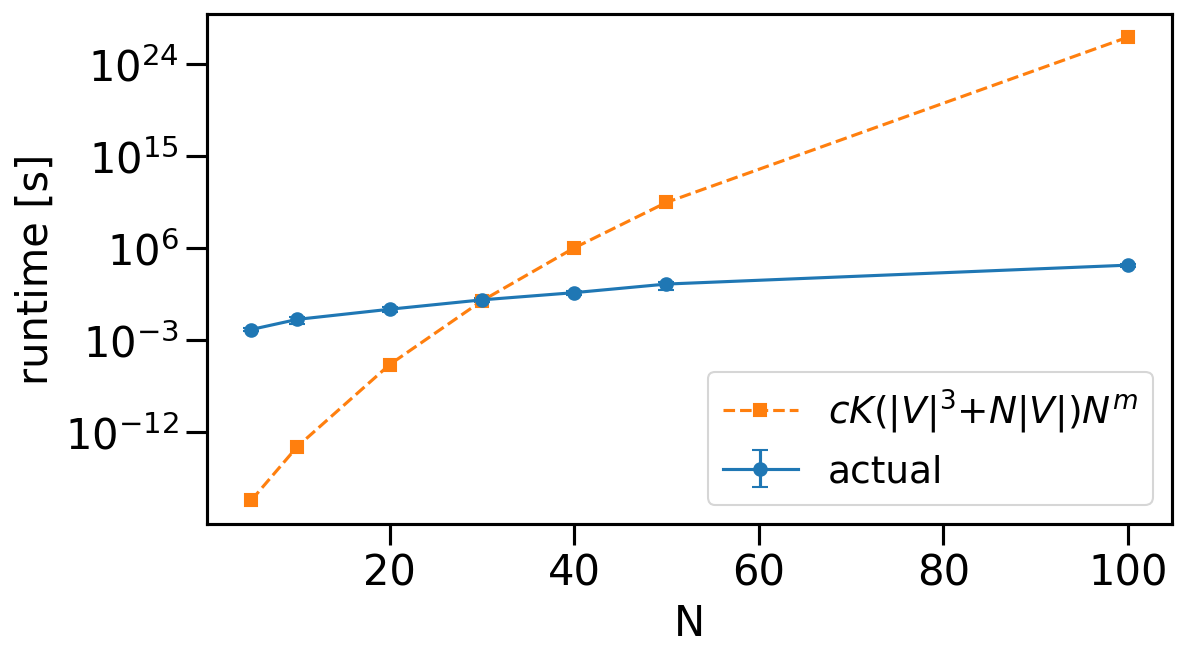}
  \subcaption{Running time of CDAR vs.\ the theoretical upper bound ($K=1$,
  $|\mc{V}|=25$, m=80, constant $c\approx 3.41\times10^{-31}$). Error bars show 95\%
  confidence intervals.}
  \label{fig:tcdar}
\end{subfigure}\hfill
\begin{subfigure}[t]{.49\columnwidth}
  \centering
  \includegraphics[width=\linewidth]{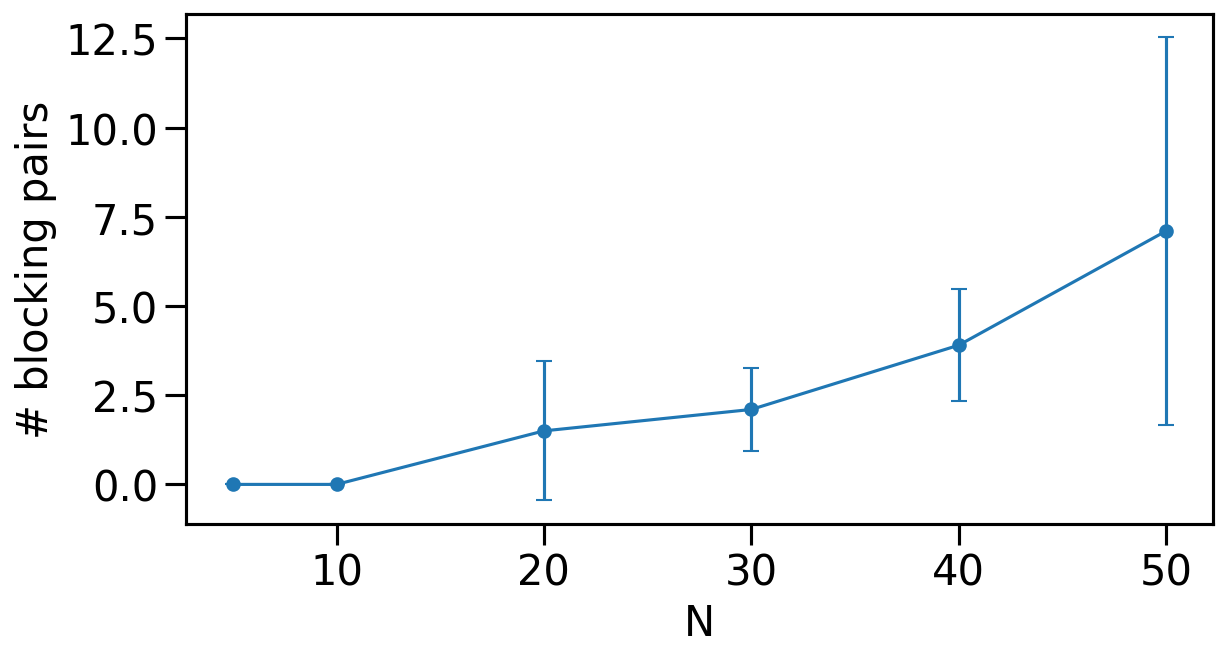}
  \subcaption{Number of blocking pairs remaining in the output of DA with
  reproposals disabled. Error bars show 95\%
  confidence intervals.}
  \label{fig:nblock}
\end{subfigure}
\caption{Experimental results on traffic instances over a $5\times 5$ grid road
network. (a) running time; (b) necessity of reproposals.}
\label{fig:experiments}
\end{figure}

\section*{Conclusion}

In this paper, we studied the many-to-one stable matching problem in the
firm-worker misalignment setting.
Although the existence of stable matchings is not guaranteed in general under
complementary preferences, we showed that imposing the anchor condition
(Assumption~\ref{asmp:anchor}) and Transitive Alignment (Definition~\ref{def:ta})
guarantees the existence of at least one stable matching for every problem
instance.
We also proposed CDAR, a DA-style algorithm permitting reproposals, and proved
finite-time convergence and the output of a stable matching by constructing an
$N$-digit potential function.
Furthermore, we extended the known results on Pareto optimality to the setting
permitting complementarity, showing that under strict preferences the output of
CDAR is Pareto optimal among stable matchings.
In addition, through numerical verification on a $5\times 5$ grid road network,
we confirmed that the anchor condition and TA hold by construction in the traffic
application, and that CDAR outputs stable matchings.

Directions for future work include the following.
First, the analysis of strategic manipulability and fairness.
Second, expansion to broader application domains such as service chaining and
wage negotiation, and validation with real data.

\bibliography{ref}

\clearpage
\begin{appendices}


\color{\newcolor}

\section{Appendix: Proof of Lemma~\ref{lem:potential}}
\label{sec:proof-potential}

\begin{proof}
By the anchor condition, every coalition assigned to firm $f$ contains the anchor worker $a_f$.
By line~\ref{alg:reject} of Algorithm~\ref{alg:gcdar}, $a_f$ strictly improves upon acceptance, and hence by Definition~\ref{def:ta} no change to a coalition that does not increase $P_f$ is accepted.
Therefore, a decrease in $P_f$ can occur only when the firm loses its coalition after being defeated in competition with another firm (line~\ref{alg:unmatch}).

Let $\mu'$ and $\mu$ denote the matchings before and after an iteration, respectively,
and let $f'_{(k)}$ and $f_{(k)}$ denote the firms corresponding to the $k$-th digit of the potential before and after the iteration, respectively.
Suppose $P_{(k)} < P'_{(k)}$; then $P_{f_{(k)}}(\mu') > P_{(k)}$.
This means that firm $f_{(k)}$ was defeated in competition and lost at least one worker $w$
to some firm $f_{(\ell)}$. Since $w$ strictly improves upon acceptance,
$V_w(f_{(\ell)},\mu(f_{(\ell)})) > V_w(f_{(k)},\mu'(f_{(k)}))$,
and since $w$ belongs to both the old and the new allocation,
Definition~\ref{def:ta} yields
\begin{align}
P_{(\ell)}
&=\chi(f_{(\ell)},\mu(f_{(\ell)}))
\notag\\
&>\chi(f_{(k)},\mu'(f_{(k)}))
=P_{f_{(k)}}(\mu')
>P_{(k)}, \label{eq:ta_applied}
\end{align}
and hence $\ell < k$.
Furthermore, since $a_{f_{(\ell)}}$ accepts only updates that strictly increase its utility,
$P_{(\ell)} > P_{f_{(\ell)}}(\mu')$, i.e., $P_{(\ell)} > P'_{(\ell)}$ holds.
Consequently, whenever $P_{(k)}$ decreases, some $P_{(\ell)}$ with $\ell < k$ increases,
and $(P_{(1)},\dots,P_{(N)})$ strictly increases in lexicographic order.
\end{proof}

\section{Appendix: Proof of Theorem~\ref{thm:stable}}
\label{sec:proof-stable}

\begin{proof}
By contradiction. Suppose a blocking pair $(f,C)$ exists. That is,
\begin{align*}
U_f(C) &> U_f(\mu(f)), \\
V_w(f,C) &> V_w(\mu(w),\mu(\mu(w))), \quad\forall w\in C.
\end{align*}
Since $\mu(f)\neq C$, firm $f$ must have once proposed $C$ and been rejected, so $C\in R_f$.
Let $\mu'$ denote the matching at the moment of the last rejection; then the rejection set
\[
S_f^C
=
\{w\in C\mid V_w(\mu'(w),\mu'(\mu'(w)))\ge V_w(f,C)\}
\]
is nonempty.
While $C\in R_f$, firm $f$ does not repropose $C$,
so $S_f^C$ only shrinks monotonically.
Therefore, there exists a worker $w\in S_f^C$ who remains assigned until the algorithm terminates.
Since workers in $S_f^C$ never accept an allocation change that strictly decreases their utility,
\[
V_w(\mu(w),\mu(\mu(w)))
\ge
V_w(\mu'(w),\mu'(\mu'(w)))
\ge
V_w(f,C).
\]
This contradicts $V_w(f,C)>V_w(\mu(w),\mu(\mu(w)))$.
\end{proof}

\section{Appendix: Correspondence to Service Chaining}
\label{sec:service-chaining}

The correspondence of this framework also holds for service chaining in
distributed AI services.
A service chain is an assignment of a sequence of processing functions to a
sequence of nodes on a network graph,
and exhibits the complementarity that the service is realized only if all
functions are secured.

\begin{center}
\small
\begin{tabular}{lp{0.52\columnwidth}}
\hline
This framework & Service chaining application \\
\hline
Firm $f \in \mc{F}$ & Service request \\
Worker $w \in \mc{W}$ & Network node/link \\
Coalition $C \in \mc{C}_f$ & Feasible chain (node sequence) \\
Anchor worker $a_f$ & Ingress node (entry point) \\
Firm utility $U_f(C)$ & $-L(C)$ (negative latency) \\
Worker utility $V_w(f,C)$ & $-\rho(C)$ (negative resource consumption) \\
\hline
\end{tabular}
\end{center}

Since the ingress node of each service is necessarily contained in every
feasible chain, the anchor condition holds automatically by construction.
As for Transitive Alignment, it holds automatically when the resource
consumption of node $w$ is determined solely by the chain $C$, independently of
the type of request $f$---for instance, when the physical capacity of links is
dominant, as with bandwidth consumption, and differences among requests are
negligible---since the common score $-\rho$ itself is then a metaranking
(see Section \ref{sec:alignment}).

\section{Appendix: Proof of Proposition~\ref{prop:cdar-pycia}}
\label{sec:proof-cdar-pycia}

\begin{proof}
In the setting of \citet{pycia2012stability}, a metaranking $\chi$ exists and
each worker's preferences over allocations agree with the order of
$\chi(\{f\}\cup C)$.
Therefore, adopting the order induced by $\chi(\{f\}\cup C)$ directly as the
metaranking of Definition~\ref{def:ta} satisfies TA.
Since $f\in\{f\}\cup C$ holds for every $C\in\mc{C}_f$,
setting $a_f=f$ automatically satisfies the anchor condition
(Assumption~\ref{asmp:anchor}).
Consequently, Lemma~\ref{lem:potential}, Theorem~\ref{thm:converge}, and
Theorem~\ref{thm:stable} apply, and CDAR converges in finite time and outputs a
stable matching.
\end{proof}

\section{Appendix: Proof of Lemma~\ref{lem:pareto}}
\label{sec:proof-pareto}

\begin{proof}
By contradiction. Let $i$ be a dominating agent.
We have $\pi_i(\mu')>\pi_i(\mu)$ and $\pi_k(\mu')\ge\pi_k(\mu)$, $\forall k\in\mc{I}$.

When $i$ is a worker, set $j:=\mu'(i)$.
Suppose $\pi_k(\mu')\neq\pi_k(\mu)$ for all $k$ in $\{j\}\cup\mu'(j)$;
then $\pi_k(\mu')>\pi_k(\mu)$, $\forall k\in\{j\}\cup\mu'(j)$ follows.
Hence
$(j,\mu'(j))$ is a blocking pair for $\mu$ (Definition~\ref{def:blocking}), contradicting the stability of $\mu$.

When $i$ is a firm, suppose $\pi_k(\mu')\neq\pi_k(\mu)$ for all $k$ in $\mu'(i)$;
then $\pi_k(\mu')>\pi_k(\mu)$, $\forall k\in\mu'(i)$ follows.
Hence $(i,\mu'(i))$ is a blocking pair for $\mu$, contradicting the stability of $\mu$.
\end{proof}

\section{Appendix: Violation of Pareto Optimality Caused by Ties}
\label{sec:non_pareto}

\begin{example}[Violation of Pareto optimality caused by ties]
\label{ex:pareto_tie}
Let the set of firms be $\mc{F}=\{A,B,C,D\}$ and the set of workers $\mc{W}=\{a,b,c,d\}$.
The feasible coalitions are
\begin{align*}
&\mc{C}_A=\{\{b\},\{a\}\},\quad \mc{C}_B=\{\{a\},\{b\}\},\\
&\mc{C}_C=\{\{b,c\},\{d\}\},\quad \mc{C}_D=\{\{c\}\}.
\end{align*}
Each coalition contains an anchor worker $a_A,a_B,a_C,a_D$,
but we omit the notation in this example.
The firm utilities satisfy $U_A(\{b\})>U_A(\{a\})$, $U_B(\{a\})>U_B(\{b\})$,
$U_C(\{b,c\})>U_C(\{d\})$, $U_D(\{c\})=1$.
The worker utilities contain ties:
\begin{align*}
V_a(A,\{a\})&=V_a(B,\{a\}),\\
V_b(A,\{b\})&=V_b(B,\{b\})=V_b(C,\{b,c\}),\\
V_c(D,\{c\})&>V_c(C,\{b,c\}).
\end{align*}
Running CDAR under the proposal sequence $\Phi=(C,A,D,B,\ldots)$ outputs
\[
\mu(A)=\{a\},\ \mu(B)=\{b\},\ \mu(C)=\{d\},\ \mu(D)=\{c\}.
\]
Although $\mu$ has no blocking pair, setting
\[
\mu'(A)=\{b\},\ \mu'(B)=\{a\},\ \mu'(C)=\{d\},\ \mu'(D)=\{c\},
\]
$\mu'$ has no blocking pair either. Moreover, $\pi_A(\mu')>\pi_A(\mu)$, $\pi_B(\mu')>\pi_B(\mu)$,
and the payoffs of all other agents are unchanged,
so $\mu'$ Pareto dominates $\mu$.
\end{example}

\section{Appendix: Proof of the Complexity in the Traffic Application}
\label{sec:proof-traffic-complexity}

The following lemma shows that any collision arising along a route can be
resolved within a finite number of trials by departure delays alone,
without stopping at intermediate points along the route.
This justifies that the candidates to be tried in the proposal phase are limited
to the $K$ node sequences and their feasibilization,
without having to maintain the stopping patterns at intermediate points along a
route combinatorially as separate candidates.

\begin{lemma}[Feasibility via departure delays]
\label{lem:feasibility}
Under Assumption~\ref{asmp:departure-only},
for any node sequence $\text{path}(C)$ (of length $r\le|\mc{V}|$),
increasing the departure delay at most $O(rN)$ times yields a slot sequence
$C\in\mc{C}_f$ that resolves all collisions with the other $N-1$ vehicles.
\end{lemma}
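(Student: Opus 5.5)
The plan is a potential argument over the departure delay, where each delay step moves the route past at least one blocking slot. Fix the node sequence $\text{path}(C)=(v_0=O_f,v_1,\dots,v_{r-1})$ and the travel times along its edges. Under Assumption~\ref{asmp:departure-only} there is no intermediate waiting, so the slot sequence is completely determined by the departure time $t_0$. Concretely, it is $\{(v_i,t_0+\tau_i)\}_{i=0}^{r-1}$, where $\tau_i$ is the fixed cumulative travel time to $v_i$. Feasibility is therefore a one-parameter problem in $t_0\ge T^0_f$. Each of the other $N-1$ vehicles is currently matched to a finite slot sequence of length at most $|\mc{V}|$. So the set $\mathcal{O}$ of occupied slots is finite, with $|\mathcal{O}|\le (N-1)|\mc{V}|$.

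The key step is to bound the number of bad departure times that actually matter. For each node $v_i$ on the path and each occupied slot $(v_i,s)\in\mathcal{O}$, the departure time $t_0=s-\tau_i$ is forbidden. The other slots $(v,s)$ with $v\notin\text{path}(C)$ never conflict. A node can appear on the path of a vehicle only once per visit, so we charge each forbidden value to a pair consisting of a position $i\le r$ and an occupying vehicle. This gives at most $rN$ forbidden departure times in total. The set of forbidden values is thus a finite set $B$ with $|B|=O(rN)$.

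Next I describe the delay procedure: start at $t_0=T^0_f$, and while the current slot sequence collides with $\mathcal{O}$, increase $t_0$ by one time unit. Equivalently, jump to the next value of $t_0$ not in $B$. Each increment either lands on a feasible $t_0$ or passes a distinct element of $B$, and elements of $B$ are never revisited because $t_0$ is monotone. Hence after at most $|B|+1=O(rN)$ increments we reach some $t_0\notin B$. At such a $t_0$ the resulting slot sequence avoids every occupied slot, so it is collision-free and belongs to $\mc{C}_f$. This uses the fact that $\mc{C}_f$ contains every collision-free departure shift of the path, which is how feasible coalitions were defined. The time horizon $\mc{T}$ must be large enough to contain $\max B+1$. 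I would state this as the standing convention, namely that $\mc{T}$ is taken unbounded or sufficiently long.

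The main obstacle is the counting in the key step when increments are unit steps rather than jumps. Consecutive forbidden values may be separated by gaps, but gaps cost nothing, since a unit step that lands on a free value ends the procedure. So the only subtlety is to make sure each unit step either terminates or corresponds injectively to an element of $B$. A second point to handle is whether $C$ itself must avoid collisions with the vehicles' own anchor slots. These are just members of $\mathcal{O}$ and are already counted in $B$, so the bound $O(rN)$ stands.
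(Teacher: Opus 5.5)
Your proof is correct and uses the same counting as the paper. There is at most one conflict per (position on the path, other vehicle) pair, giving $O(rN)$ bad events, and each unit delay that does not terminate uses up one of them.

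Your forbidden-departure-time set $B$ is a cleaner version of the paper's claim that ``each collision point can be resolved by increasing the departure delay at most once.'' Because $t_0$ only increases and $B$ is a fixed finite set, your argument also covers collisions newly created by a shift, which the paper's argument glosses over. Your charging to (position, vehicle) pairs does need each route to visit a node at most once. This holds for the loopless node sequences used here.
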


\begin{proof}
A collision with another vehicle at a slot along the route occurs only when
that vehicle attempts to occupy the same slot at the same time.
Delaying the departure by one unit shifts the arrival times of all slots along
the vehicle's route uniformly backward.
Since one pair of vehicles can collide separately at each of the multiple
intersections shared by their routes, collisions per vehicle pair can occur at
at most $r$ slots.
Therefore, the total number of collision points with the other $N-1$ vehicles is
at most $O(rN)$, and each collision point can be resolved by increasing the
departure delay at most once (it suffices to delay until the overlap of
occupation times is avoided).
Hence, the number of departure-delay increases required to resolve all
collisions is bounded by $O(rN)\subseteq O(N|\mc{V}|)$.
\end{proof}

\begin{proof}[Proof of Theorem~\ref{thm:traffic-complexity}]
We decompose the proof into three components:
(i) the number of node sequences and the cost of enumerating them,
(ii) the number of delay increases required to turn one node sequence into a
feasible slot sequence, and
(iii) the number of iterations until convergence and the total complexity.

\textbf{(i) Number of node sequences and enumeration cost.}
As defined in the main text, the number of distinct node sequences for firm $f$
is at most $K$.
Yen's algorithm~\cite{yen1971finding} enumerates the node sequences up to the
$k$-th shortest travel time in $O(k|\mc{V}|^3)$,
so the cost of enumerating all $K$ candidate node sequences for firm $f$ is
$O(K|\mc{V}|^3)$.

\textbf{(ii) Number of delay increases to make a node sequence feasible.}
By Lemma~\ref{lem:feasibility}, for a given node sequence
(of length $r\le|\mc{V}|$), the number of departure-delay values tried in order
to resolve collisions is at most $O(rN)\subseteq O(N|\mc{V}|)$.

Since the delay determination of (ii) is performed for each of the $K$ node
sequences enumerated in (i), for firm $f$, combining the enumeration cost
$O(K|\mc{V}|^3)$ of the candidate node sequences with the total delay
determination cost $O(KN|\mc{V}|)$ (i.e., $O(N|\mc{V}|)$ per sequence for the
$K$ sequences), we obtain
\[
O(K|\mc{V}|^3) + O(KN|\mc{V}|) = O(K(|\mc{V}|^3+N|\mc{V}|)).
\]

\textbf{(iii) Number of iterations until convergence and total complexity.}
By Lemma~\ref{lem:potential}, the potential function $\phi(\mu)$ strictly
increases each time $\mu$ is updated.
$\phi(\mu)$ is a sorted $N$-digit number formed by arranging
$\{P_f(\mu)\}_{f\in\mc{F}}$ in descending order, and each digit takes values in
$\{0,1,\dots,m\}$.
Therefore, treating $m$ as a constant, the total number of values $\phi(\mu)$
can take is bounded by
\[
\binom{N+m}{N}
=O\!\left(\frac{(N+m)^{m}}{m!}\right)
=O\!\left(N^{m}\right),
\]
and the number of iterations is bounded by this quantity.
The dominant cost of each iteration is the per-firm candidate enumeration and
feasibilization cost $O(K(|\mc{V}|^3+N|\mc{V}|))$ evaluated in (i) and (ii),
so the total complexity of CDAR is
\begin{align*}
&O(K(|\mc{V}|^3+N|\mc{V}|))\times O\!\left(N^{m}\right)\\
&\quad=O\!\left(K(|\mc{V}|^3+N|\mc{V}|)\,N^{m}\right).
\end{align*}

\end{proof}

\section{Appendix: Details of reproducibility}
 
We call numpy.random.seed at the beginning of the main code with a different seed for each run of the simulation.
The simulation is run on an AWS t2.xlarge instance (4 vCPU, 16GB RAM, Ubuntu Server 24.04 LTS) with Python 3.12.3.
All Python packages and versions are listed in requirements.txt in the code.
Table \ref{tbl:parameter} shows the list of all parameters.
 
\begin{table}[h]
\centering
\begin{tabular}{|l|l|}
\hline
$|\mc{V}|$ & As described in the paper \\
$L_e$ & Length of edge $e\in\mc{E}$: Random number greater than 5 \\
$dT$ & Time width of a slot: 4 \\
$N$ & As described in the paper \\
$V_i^{\max}$ & Vehicle speed: 1 \\
$\Phi$ & Ascending order of $\mc{N}$\\
\hline
\end{tabular}
 
\caption{List of parameters}
\label{tbl:parameter}
\end{table}

\end{appendices}


\end{document}